\def\struckint{\mathop{%
\def\mathpalette##1##2{\mathchoice{##1\displaystyle##2}%
 {##1\textstyle##2}{##1\scriptstyle##2}{##1\scriptscriptstyle##2}}%
\mathpalette
{\vbox\bgroup\baselineskip0pt\lineskiplimit-1000pt\lineskip-1000pt
\halign\bgroup\hfill$}
{##$\hfill\cr{\intop}\cr\diagup\cr\egroup\egroup}%
}\limits}
\def\pr{\mathop{\text{pr}}\nolimits}
\def\k{{\bf k}}
\def\z{{\bf z}}
\def\W{{\bf W}}
\def\pr{\mathop{\text{pr}}\nolimits}
\def\E{\mathcal{E}}
\newtheorem{thm}{Theorem}[section]
\newtheorem{lemma}[thm]{Lemma}
\newtheorem{assumption}[thm]{Assumption}
\newtheorem{rmk}[thm]{Remark}
\def\indep{\mathrel{\rlap{$\perp$}\kern1.6pt\mathord{\perp}}}
\def\H{\mathcal{H}}
\def\E{\mathbb{E}}
\def\Y{{\bf Y}}
\def\given{\, | \,}
\def\Given{\, \big | \,}
\def\bft{{\bf t}}
\def\bfx{{\bf x}}
\def\bfT{{\bf T}}
\def\bfD{{\bf D}}
\def\dotminussym#1#2{%
  \setbox0=\hbox{$\m@th#1-$}%
  \kern.5\wd0%
  \hbox to 0pt{\hss\hbox{$\m@th#1-$}\hss}%
  \raise.6\ht0\hbox to 0pt{\hss$\m@th#1.$\hss}%
  \kern.5\wd0}
\mathchardef\mhyphen="2D
\begin{document}

\title[Recurrent event analysis with functional covariates via random subsampling]{Recurrent event analysis in the presence of real-time high frequency data via random subsampling}
\author{Walter Dempsey}
\address {Department of Biostatistics, University of Michigan, 1415 Washington Heights, Ann Arbor, MI 48109, USA}
 \email{wdem@umich.edu}

\date{\today}

\begin{abstract}
Digital monitoring studies collect real-time high frequency data via mobile sensors in the subjects' natural environment. This data can be used to model the impact of changes in physiology on recurrent event outcomes such as smoking, drug use, alcohol use, or self-identified moments of suicide ideation. Likelihood calculations for the recurrent event analysis, however, become computationally prohibitive in this setting. Motivated by this, a random subsampling framework is proposed for computationally efficient, approximate likelihood-based estimation. A subsampling-unbiased estimator for the derivative of the cumulative hazard enters into an approximation of log-likelihood. The estimator has two sources of variation: the first due to the recurrent event model and the second due to subsampling. The latter can be reduced by increasing the sampling rate; however, this leads to increased computational costs.  The approximate score equations are equivalent to logistic regression score equations, allowing for standard, ``off-the-shelf'' software to be used in fitting these models. Simulations demonstrate the method and efficiency-computation trade-off. We end by illustrating our approach using data from a digital monitoring study of suicidal ideation.
\end{abstract}

\keywords{recurrent events; probabilistic subsampling; estimating equations; high frequency time series; logistic regression}

\maketitle

\section{Introduction}

Advancement in mobile technology has led to the rapid integration of mobile and wearable sensors into behavioral health~\citep{Freeetal2013}. Take HeartSteps, for example, a mobile health (mHealth) study designed to increase physical activity in sedentary adults \citep{KlasnjaHS2019}. Here, a Jawbone sensor is used to monitor step count every minute of the participant's study day. Of interest in many mHealth studies is the relation of such real-time high frequency sensor data to an adverse, recurrent event process. In a smoking cessation mHealth study~\citep{Sense2Stop}, for example, the relation between a time-varying sensor-based measure of physiological stress and smoking lapse is of scientific interest. In a suicidal ideation mHealth study~\citep{Kleiman2018}, the relation of electrodermal activity (EDA) and accelerometer with self-identified moments of suicidal ideation is of scientific interest.

The goal of this paper is two-fold: (1) to discuss the appropriate choice of statistical model for joint high-frequency sensor data and the recurrent event data, and (2) to construct a simple, easy-to-implement method for parameter estimation and inference. For (1), we discuss an important issue regarding measurement-error models when paired with recurrent event outcomes. For (2), we introduce a random subsampling procedure that has several benefits.  First, the resulting inference is unbiased; however, there is a computation-efficiency trade-off. In particular, a higher sampling rate can decrease estimator variance at the cost of increased computation.  We show via simulations that the benefits of incredibly high sampling rates is often negligible, as the contribution to the variation is small relative to the variation in the underlying stochastic processes. Second, derived estimating equations are optimal, implying loss of statistical efficiency is only due to subsampling procedure and not the derived methodology.  Finally, implementation can leverage existing, standard software for functional data analysis and logistic regression, leading to fast adoption by domain scientists.

\section{Recurrent event process and associated high frequency data}

Suppose $n$ subjects are independently sampled with observed event times~$\bfT_{i} = \{ T_{i,1}, \ldots, T_{i,k_i}\}$ over some observation window $[0, \tau_i]$ for each subject $i = 1,\ldots, n$.  Assume the event times are ordered, i.e., $T_{i,j} < T_{i,j^\prime}$ for $j < j^\prime$. The observation window length, $\tau_i$, is the censoring time and is assumed independent of the event process. Let~$N_{i} (t)$ denote the associated counting process of $\bfT_{i}$; that is, $N_i (t) = \sum_{j=1}^{k_i} 1 [ T_{i,j} < t ]$. In this section, we assume a single-dimensional health process~$\bfx_i = \{ x_i (s) \}_{0 < s < \tau_i}$ for each participant is measured at a dense grid of time points.  Accelerometer, for example, is measured at a rate of 32Hz (i.e., $32$ times per second). Electrodermal activity (EDA), on the other hand, is measured at a rate of 4Hz (i.e., 4 times per second).  Given the high frequency nature of sensor data, this paper assumes the process is measured continuously.

Let~$H_{i,t}^{NX} = H_{i,t}^{N} \otimes H_{i,t}^{X}$ be the $\sigma$-field generated by all past values~$(N_i (s), x_i (s))_{0 \leq s \leq t}$. In this paper, the instantaneous risk of an event at time~$t$ is assumed to depend on the health process, time-in-study, and the event history through a fully parametric conditional hazard function:
\begin{equation}
\label{eq:hazard}
h_i \left( t \Given H_{i,t}^{NX} ; \theta \right) =
\lim_{\delta \to 0} \delta^{-1} \pr \left( N_i(t+\delta) - N_i(t) > 0
  \given H_{i,t}^{NX} \right),
\end{equation}
where~$\theta$ is the parameter vector. For high frequency physiological data, we assume that current risk is log-additive and depends on a linear functional of the health process over some recent window of time and some pre-specified features of the counting process; that is,
\begin{equation}
\label{eq:hazardlinear}
h_i \left( t \given  H_{i,t}^{NX} ; \theta \right) =
h_0 (t; \gamma) \exp \left( g_t \left( H_{i,t}^{N} \right)^{\prime} \alpha
  + \int_{t-\Delta}^t x_i (s) \beta(s) ds  \right)
\end{equation}
where~$h_0(t;\gamma)$ is a parametrized baseline hazard function,~$\Delta$ is an unknown window-length, and $g_t( H_{i,t}^N ) \in \mathbb{R}^p$ is a $p$-length feature vector summarizing the event-history and time-in-study information. The final term~$\int_{t-\Delta}^t x_i(s) \beta(s) ds$ reflects the unknown linear functional form of the impact of the time-varying covariate on current risk.

An alternative to~\eqref{eq:hazardlinear} would be to construct features from the sensor data history~$f_t ( H_{i,t}^{X}) \in \mathbb{R}^q$ and incorporated these features in the place of the final term. Our current approach builds linear features of~$H_{i,t}^X$ directly from the integrated history, avoiding the feature construction problem -- a highly nontrivial issue for high frequency time-series data.  The main caveat is the additional parameter~$\Delta$; however, as long as the estimated~$\hat \Delta$ exceeds~$\Delta$, then resulting estimation is unbiased albeit at a loss of efficiency.  Moreover, sensitivity analysis can be performed to determine how choice of $\hat \Delta$ affects inference.  One limitation of the approach presented here is that only fully parametric hazard models may be fit to the data. However, a spline model for the log baseline hazard affords sufficient model flexibility.

\subsection{Measurement-error models with event processes}
\label{section:memproblems}

One potential criticism for~\eqref{eq:hazard} is that the health process may be measured with error. A common mathematical strategy for joint models is to consider an unobservable, latent process~$\eta_i$ such that~$\bfT_i \indep x_i \given \eta_i$, i.e., the two processes are conditionally independent given the latent trajectory. For example, take~$\eta_i$ to be a zero-mean Gaussian process with
\begin{equation}\label{eq:jm}
x_i(t) = \eta_i (t) + \epsilon_i (t),\quad \text{ and } \log h_i (t
\given \eta ) = \log h_0 (t) + g_t \left( H_{i,t}^N \right)^{\prime}
\alpha + \int_{t-\Delta}^t \eta_i (s) \beta (s) ds
\end{equation}
where~$\epsilon_i(t)$ is a white-noise measurement error term. Thus, $\eta_i (t)$ is the ``true and unobserved value of the longitudinal outcome'' \citep[Sec. 2.1, pp.3]{Rizopoulos2010}. The conditional survival function of the $j$th event~$T_j$ given~$\bfx_i$ and all prior events~$\bfT_{-j} := \{ T_1, \ldots, T_{j-1} \}$ is
\[
\pr \left ( T_j > t+s \Given \H_{i,\tau_i}^X, \bfT_{-j} = \bft_{-j}, T_j > t \right) = E \left( \exp \left( -\int_{t}^{t+s} h_i (u \given \eta ) du \right) \Given \bfx_i, \bfT_{-j} = \bft_{-j} \right),
\]
where the expectation is a Gaussian integral, albeit infinite-dimensional.  For most choices of covariance structure, if the white-noise error term is non-zero then the above calculation will show the conditional survival function depends not only on past $x$-values, but also on future $x$-values, i.e.,~$\bfx$ and~$\bfT$ do not satisfy \emph{independent evolution}~\citep{DempseyPMCC2}. This is quite unnatural, as~\eqref{eq:jm} suggests the instantaneous risk of an event at time~$t$ depends on future values of the sensor process. To ensure independent evolution, we set~$\epsilon (t) \equiv 0$ and treat~$\bfx_i$ as a Gaussian process measured without a white-noise error term.
Our procedure will only rely on the Gaussian assumption when~$\bfx$ is not fully observed (see Section~\ref{section:missingdata}).

\subsection{Likelihood calculation}

For the sake of notational simplicity, we leave the dependency of the conditional hazard function on~$H_{i,t}^{NX}$ implicit, and write~$h_i (t; \theta)$. The component of the log-likelihood related to the event process is then given by
\[
  L_n (\theta) = \sum_{i=1}^{n} \left ( \sum_{j=1}^{k_i}
    \log \left( h_i \left( T_{i,j}; \theta
      \right) \right) - H_{i} \left( \tau_i; \theta \right) \right)
\]
where~$H_{i} (\tau_i ; \theta) = \int_{0}^{\tau_i} h_{i} (t; \theta) dt$ is the cumulative hazard function. Solving the associated score equations~$U_n (\theta) = {\bf 0}$ yields the maximum likelihood estimator~$\hat \theta$, where
\[
U_n (\theta) = \sum_{i=1}^{n} \left ( \sum_{j=1}^{k_i} \frac{h^{(1)}_i
    (T_{i,j}; \theta)}{h_i (T_{i,j}; \theta)} - H^{(1)}_{i} (\tau_i;
  \theta) \right),
\]
with $h_i^{(1)} (T_{i,j}; \theta)$ and $H_i^{(1)} (\tau_{i}; \theta)$ are derivatives with respect to $\theta$.

In classical joint models~\citep{Henderson2000, Tsiatis2004}, time-varying covariates~$x_i (t)$ are observed only intermittently at appointment times. In our current setting, maximizing the likelihood is computationally prohibitive since for any~$\theta$ we must compute the cumulative hazard functions~$H_{i} (\tau_i; \theta)$ which require integration of~$h_i(t;\theta)$ given by~\eqref{eq:hazardlinear} which itself depends on the integral~$\int_{t-\Delta}^t x_i (s) \beta(s) ds$ that is a function of the unknown functional parameter $\beta(\cdot)$.  That is, the risk model now depends on an integrated past history of the time-varying covariate which leads to severe increase in computational complexity.

\subsection{Probabilistic subsampling framework}

To solve the computational challenge we employ a point-process subsampling design to obtain unbiased estimates of the derivative of the cumulative hazards for each subject. The subsampling procedure treats the collected sensor data as a set of \emph{potential observations}. Suppose covariate information is sampled at times drawn from an independent inhomogeneous Poisson point process with known intensity~$\pi_i (t)$. At a subsampled time~$t$, the \emph{windowed covariate history} $\{ x_i (t-s)\}_{0 \leq s \leq \Delta}$ and counting process features~$g_t (H_{i,t}^N)$ are observed. Optimal choice of~$\pi_i (t)$ is beyond the scope of this paper; however, simulation studies have suggested setting the subsampling rate proportional to the hazard function~$h_i (t; \theta)$.

An estimator is design-unbiased if its expectation is equal to that parameter under the probability distribution induced by the sampling design~\citep{Cassel1977}. Let~$\bfD_i \subset [0,t_i]$ denote the random set of subsampled points.  Note, by construction, this random set is distinct from the set of event times with probabilty one, i.e.,~$\pr( \bfT_i \cap \bfD_i = \emptyset) = 1$.  Under subsampling via $\pi_i (t)$, one may compute a Horvitz-Thompson estimator of the derivative of the cumulative hazard $\hat H_{i}^{(1)} (\tau_i; \theta) = \sum_{u \in \bfD_i} h^{(1)}_i (u; \theta)/\pi_i (u)$. An alternative design-unbiased estimator of the derivative of the cumulative hazards is given by
\begin{equation}
\label{eq:WPest}
\hat H_{i}^{(1)} (\tau_i; \theta) = \sum_{u \in (\bfT_i \cup \bfD_i)}
\frac{ h^{(1)}_i ( u; \theta ) }{ \pi_i (u) + h_i ( u ; \theta) }
\end{equation}
Equation~\eqref{eq:WPest} is the estimator suggested by~\cite{Waagepetersen2008}.  This estimator depends on the superposition of the event and subsampling processes. Proposition~\ref{prop:optimal} shows the estimator for~$\theta$ associated with using~\eqref{eq:WPest} is the most efficient within a suitable class of estimators for the derivative of the cumulative hazard function (including the Horvitz-Thompson estimator). Therefore, we restrict our attention to~\eqref{eq:WPest} for the remainder of this paper. Letting
\begin{equation}
\label{eq:waage_weights}
w_i (t; \theta) = \frac{\pi_i (t)}{\pi_i (t) + h_i (t ; \theta)},
\end{equation}
the resulting approximate estimating equations can be re-written as
\begin{equation}
\label{eq:approxscore}
\hat{U}_n (\theta) = \sum_{i=1}^n \left[ \sum_{u \in \bfT_i} w_i(u; \theta)
  \frac{h^{(1)} (u; \theta)}{ h ( u; \theta)}  - \sum_{u \in \bfD_i} w_i(u;
  \theta) \frac{h_i^{(1)} (u; \theta)}{ \pi_i (u) } \right].
\end{equation}
Equation~\eqref{eq:approxscore} represents the approximate score functions built via plug-in of the design-unbiased estimator of the derivative of the cumulative hazard given in~\eqref{eq:WPest}.

\begin{rmk}[Connection to design-based estimation]
The approximate score equations given by~\eqref{eq:approxscore} arise in design-based inference of point processes.  Design-based inference is common for spatial point processes~\citep{Waagepetersen2008} where the spatial varying covariate is observed at a random sample of locations. It is common in mobile health where ecological momentary assessments~\citep{Rathbun2012,Rathbun2016} are used to randomly sample individuals at various time-points to assess their emotional state.  In the current setting, we leverage these ideas to form a subsampling protocol that can substantially reduce computationally complexity.  Therefore, the purpose is quite different. Moreover, the dependence of the intensity function on the recent history of sensor values leads to additional complications that must be addressed.
\end{rmk}

\section{Longitudinal functional principal components within
  event-history analysis}

Probabilistic subsampling converts the single sensor stream~$\bfx_i$ into a sequence of functions observed repeatedly at sampled times~$\bfD_i$ and event times~$\bfT_i$. Such a data structure is commonly referred to as \emph{longitudinal functional data}~\citep{Xiao2013, GoldSmith2015}. Given the large increase in longitudinal functional data in recent years, corresponding analysis has received much recent attention~\citep{Morris2003, MorrisCarroll2006, Baladandayuthapani2008, Di2009, Greven2010, Staicu2010, ChenMuller2012, LiGuan2014}. Here, we combine work by~\cite{Park2018} and~\cite{Goldsmith2011} to construct a computationally efficient penalized functional method for solving the estimation equations~$\hat U_n (\theta)$.

\subsection{Estimation of the windowed covariate history}
\label{sec:margcov}
We start by defining~$X(t,s) = x(t-s)$ to be the sensor measurement~$0 \leq s \leq \Delta$ time units prior to time~$t \in \bfT_i \cup \bfD_i$. We use the sandwich smoother~\citep{Xiao2013} to estimate the mean~$\mu_y(t,s) = \E_y [ X(t,s)]$ where the expectation is indexed by whether $t$ is an event ($y=1$) or subsampled ($y=0$) time respectively. Alternative bivariate smoothers exist, such as the kernel-based local linear smoother~\citep{Hastie2009}, bivariate tensor product splines~\citep{Wood2006}, and the bivariate penalized spline smoother~\citep{MarxEilers2005}. The sandwich smoother was chosen for its computational efficiency and estimation accuracy. We then define~$\tilde X(t,s) = X(t,s) - \hat \mu_y(t,s)$ to be the mean-zero process at each time~$t \in \bfT_i \cup \bfD_i$.

As in~\cite{Park2018}, define the \emph{marginal covariance} by
\[
\Sigma_y (s, s^\prime) = \int_{0}^\tau c_y( (T,s), (T,s^\prime) ) f_y(T) dt.
\]
for~$0 \leq s,s^\prime \leq \Delta$, where~$c_y((t,s), (t,s^\prime))$ is the covariance function of the windowed covariate history~$X(t,\cdot)$ and $f_y(T)$ is the intensity function for event ($y=1$) and subsampled ($y=0$) times respectively. Estimation of~$\Sigma_y$ occurs in two steps. For simplicity, we present the steps for subsampled times (i.e., $y=0$) but the steps are the same for event times as well. First, the pooled sample covariance is calculated at a set of grid points:
\[
\tilde \Sigma_0 (s_r , s_{r^\prime}) = \left( \sum_{i=1}^n \left | \bfD_i \right |\right)^{-1} \left( \sum_{i=1}^n \sum_{t \in \bfD_i} \tilde X (t,s_r) \tilde X (t,s_{r^\prime}) \right).
\]
Due to our concern over independent evolution as discussed in section~\ref{section:memproblems}, we do not assume that each observation is observed with white noise and therefore the diagonal elements of~$\hat \Sigma_0$ are not inflated. Second, the estimator~$\hat \Sigma$ is further smoothed again using the sandwich smoother~\citep{Xiao2013}. Note~\cite{Park2018} smooth the off-diagonal elements, while here we smooth the entire pooled sample covariance matrix. All negative eigenvalues are set to zero to ensure positive semi-definiteness. The result is used as an estimator~$\hat \Sigma_0$ for the pooled covariance $\Sigma_0$.

Next, we take the spectral decomposition of the estimated covariance function; let $\{ \hat \psi^{(0)}_k (s),\hat \lambda^{(0)}_k \}_{k \geq 1}$ be the resulting sequence of eigenfunctions and eigenvalues. The key benefit of the marginal covariance approach is that it allows us to compute a single, time-invariant basis expansion; this reduces the computational burden by avoiding the three dimensional covariance function (i.e., covariance depends on~$t$) and associated spectral decomposition in methods considered by \cite{ChenMuller2012}. Using the Karhunen-Lo{\`e}ve decomposition, we can represent~$X(t,s)$ for $t \in \bfT_i \cap \bfD_i$ by
\[
X(t,s) = \hat \mu_y (t,s) + \sum_{k=1}^{\infty} \hat c^{(y)}_{i,k} (t) \hat
\psi^{(y)}_k (s) \approx \hat \mu_y (t,s) + {\bf c}^{(y)}_{i} (t)^\top \mathbold{\hat
  \psi}^{(y)} (s)
\]
where $\hat c^{(y)}_{i,k} (t) = \int_{t-\Delta}^t \tilde X_i (t,s) \hat \psi^{(y)}_k (s) ds$, ${\bf c}^{(y)}_i (t) = (c^{(y)}_{i,1} (t), \ldots, c^{(y)}_{i,K_x} (t))^\top$, $\mathbold{\hat \psi}^{(y)} (s) = (\hat \psi^{(y)}_1 (s), \ldots, \hat \psi^{(y)}_{K_x} (s))^\top$, and~$K_x < \infty$ is the truncation level of the infinite expansion. Following~\cite{Goldsmith2011}, we set~$K_x$ to satisfy identifiability constraints (see Section~\ref{section:beta} for details). In subsequent sections, we leave the dependence on $y$ (i.e., whether $t \in \bfT_i$ or $\in \bfD_i$) implicit unless required for notational simplicity.

\subsection{Estimation of~$\beta(s)$}
\label{section:beta}

The next step of our method is modeling~$\beta(s)$. Here, we leverage ideas from the penalized spline literature~\citep{Ruppert2003, Wood2006book}. Let~$\mathbold{\phi} (s) = \{ \phi_1 (s), \ldots, \phi_{K_b} (x) \}$ be a spline basis and assume that~$\beta(s) = \sum_{j=1}^{K_b} b_j \phi_{j} (s) = \mathbold{\phi} (t) {\bf b}$ where~${\bf b} = [b_1, \ldots, b_{K_b}]^{\top}$. Thus, the integral in~\eqref{eq:hazardlinear} can be restated as
\begin{align*}
\int_{t-\Delta}^t X(t,s) \beta(s) ds
  &\approx \int_{t-\Delta}^t \left[ \hat \mu(t,s) + {\bf c} (t)^\top
    \mathbold{\hat \psi} (s) \right] \times \left[
    \mathbold{\phi} (s) {\bf b} \right] ds \\
  &= [ M_{i,t}^\top + {\bf c} (t)^\top J_{\hat \psi, \phi} ] {\bf b}
\end{align*}
where~$M_{t} = (M_{1,t}, \ldots, M_{K_b,t})$, $M_{j,t} = \int_{t-\Delta}^t \hat \mu (t,s) \phi_j (s)$, and~$J_{\hat \psi, \phi}$ is a $K_x \times K_b$ dimensional matrix with the~$(k,l)$th entry is equal to~$\int_{0}^\Delta \hat \psi_k (s) \phi_l (s) ds$~\citep{RamsaySilverman2005}.

Given the basis for~$\beta(t)$, the model depends on choice of both~$K_b$ and~$K_x$.  We follow~\cite{Ruppert2002} by choosing $K_b$ large enough to prevent under-smoothing and~$K_x \geq K_b$ to satisfy identifiability constraints. While our theoretical analysis considers truncation levels that depend on~$n$, in practice, we follow the simple rule of thumb and set~$K_b = K_x = 35$. As long as the choices of~$K_x$ and $K_b$ are large enough, their impact on estimation is typically negligible. Below, we will exploit a connection between~\eqref{eq:approxscore} and score equations for a logistic regression model.  Before moving on, we introduce some additional notation. Define
\begin{equation}
\label{eq:approx_hazard}
h_i \left( t \given  H_{i,t}^{NX} ; \theta \right) \approx
\exp \left( Z_{t}^\top \gamma + g_t \left( H_{i,t}^{N} \right)^{\prime} \alpha
  + M_{i,t}^\top {\bf b} + C_{i,t}^\top J_{\hat \psi, \phi} {\bf b} \right)
= \exp \left( W_{i,t}^\top \theta \right),
\end{equation}
where~$\theta = (\gamma, \alpha, {\bf b})$ and~$\exp ( Z_{t}^\top \gamma) =: h_0 (t)$ is the parameterized baseline intensity function. We write~$\tilde U_n (\theta)$ to denote the approximate score function when substituting in~\eqref{eq:approx_hazard} for~\eqref{eq:hazardlinear}.

\subsection{Connection to logistic score functions}
\label{eq:logistication}

We next establish a connection between the above approximate score equations~$\tilde U_n (\theta)$ and the score equations for a logistic regression model. We can then exploit this connection to allow the model to be fit robustly using standard mixed effects software~\citep{Ruppert2002, McCulloch2001}.

\begin{lemma} \normalfont
\label{lemma:logistic}
Under weights~\eqref{eq:waage_weights} and the log-linear intensity function~\eqref{eq:approx_hazard}, the approximate score function~$\tilde U_n (\theta)$ is equivalent to
\[
\sum_{i=1}^n \sum_{t \in \bfT_i \cup \bfD_i } \left[ {\bf 1}[t \in D_i]
  - \frac{1}{1 + \exp \left[- \left( \tilde{W}_{i,t}^\top \theta +
        \log \pi_i (t) \right) \right]} \right] \tilde{W}_{i,t}
\]
where~$\tilde W_{i,t} = -W_{i,t}$. This is the score function for logistic regression with binary response~$Y_i(t)$ for $t \in \bfT_i \cup \bfD_i$ and~$i \in [n]$ where~$Y_i(t) = 1[t \in \bfT_i]$, offset~$\log \pi_i (t)$, and covariates~$\tilde W_{i,t}$.
\end{lemma}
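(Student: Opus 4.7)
The plan is to reduce the lemma to a direct algebraic verification built on two observations: (i) log-linearity of the hazard collapses the score ratio $h_i^{(1)}/h_i$ into the covariate vector $W_{i,t}$, and (ii) the subsampling construction guarantees $\bfT_i \cap \bfD_i = \emptyset$ almost surely, so sums over $\bfT_i \cup \bfD_i$ split cleanly into two disjoint pieces without any double-counting corrections.

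First I would substitute the log-linear form \eqref{eq:approx_hazard} into \eqref{eq:approxscore}. Differentiation gives $h_i^{(1)}(t;\theta) = W_{i,t}\, h_i(t;\theta)$, so the first inner sum becomes $\sum_{u \in \bfT_i} w_i(u;\theta)\, W_{i,u}$. For the second sum, I would combine $w_i(u;\theta)$ with $h_i^{(1)}(u;\theta)/\pi_i(u)$ and simplify:
\[
w_i(u;\theta)\,\frac{h_i^{(1)}(u;\theta)}{\pi_i(u)}
= \frac{\pi_i(u)}{\pi_i(u) + h_i(u;\theta)} \cdot \frac{W_{i,u}\, h_i(u;\theta)}{\pi_i(u)}
= \frac{h_i(u;\theta)}{\pi_i(u) + h_i(u;\theta)}\, W_{i,u}.
\]
Thus $\tilde U_n(\theta)$ reduces to $\sum_i\bigl[\sum_{u \in \bfT_i} \tfrac{\pi_i(u)}{\pi_i(u)+h_i(u;\theta)}\,W_{i,u} - \sum_{u \in \bfD_i}\tfrac{h_i(u;\theta)}{\pi_i(u)+h_i(u;\theta)}\,W_{i,u}\bigr]$.

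Next I would massage the claimed logistic-score expression into the same shape. Writing $\tilde W_{i,t} = -W_{i,t}$ and using $h_i(t;\theta) = \exp(W_{i,t}^\top \theta)$,
\[
\frac{1}{1 + \exp\{-(\tilde W_{i,t}^\top \theta + \log \pi_i(t))\}}
= \frac{1}{1 + h_i(t;\theta)/\pi_i(t)}
= \frac{\pi_i(t)}{\pi_i(t) + h_i(t;\theta)},
\]
so the logistic residual is $\mathbf{1}[t\in \bfD_i] - \pi_i(t)/(\pi_i(t)+h_i(t;\theta))$. Splitting the sum over $\bfT_i \cup \bfD_i$ into the two disjoint pieces (using $\pr(\bfT_i\cap\bfD_i=\emptyset)=1$), on $\bfT_i$ the indicator vanishes and the contribution is $W_{i,t}\cdot \pi_i(t)/(\pi_i(t)+h_i(t;\theta))$ (the two minus signs cancel), while on $\bfD_i$ the residual equals $h_i(t;\theta)/(\pi_i(t)+h_i(t;\theta))$ and pairing with $\tilde W_{i,t}=-W_{i,t}$ yields $-W_{i,t}\cdot h_i(t;\theta)/(\pi_i(t)+h_i(t;\theta))$. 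These match the two pieces of $\tilde U_n(\theta)$ term-by-term, completing the proof.

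Honestly there is no serious obstacle here: this is a verification rather than a derivation, and the only thing to be careful about is bookkeeping of the sign flip $\tilde W = -W$ and the offset $\log \pi_i(t)$, which together convert $\pi_i/(\pi_i+h_i)$ into a standard logistic probability with linear predictor $W_{i,t}^\top \theta - \log \pi_i(t)$. I would present the algebra compactly, flagging the disjointness $\bfT_i\cap \bfD_i=\emptyset$ and the log-linearity identity $h_i^{(1)}/h_i = W_{i,t}$ as the two structural inputs, and leaving the remaining manipulations as a single short display.
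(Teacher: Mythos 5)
Your proposal is correct and takes essentially the same route as the paper's proof: substitute the log-linear form so that $h_i^{(1)}(t;\theta)/h_i(t;\theta) = W_{i,t}$, use $\pr(\bfT_i \cap \bfD_i = \emptyset)=1$ to split and recombine the sums, and identify $w_i(t;\theta) = \pi_i(t)/(\pi_i(t)+h_i(t;\theta))$ with the expit of $\tilde W_{i,t}^\top \theta + \log \pi_i(t)$ --- the only (cosmetic) difference being that you verify the two expressions meet in the middle, whereas the paper derives the logistic form forward from the plugged-in estimator~\eqref{eq:WPest}. One trivial slip in your closing commentary: $\pi_i(t)/(\pi_i(t)+h_i(t;\theta))$ is the logistic probability with linear predictor $\log \pi_i(t) - W_{i,t}^\top\theta$ (equivalently $\tilde W_{i,t}^\top\theta + \log \pi_i(t)$), not $W_{i,t}^\top\theta - \log\pi_i(t)$, though your displayed algebra carries the correct sign throughout.
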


This connection established by Lemma~\ref{lemma:logistic} between our proposed methodology and logistic regression allows us to leverage ``off-the-shelf'' software.  The main complication is pre-processing of the functional data; however, these additional steps can also be taken care of via existing software.  Therefore, the entire data analytic pipeline is easy-to-implement and requires minimal additional effort by the end-user. To see this, we briefly review the proposed inference procedure.

\begin{rmk}[Inference procedure review] \normalfont
Given observed recurrent event and high frequency data~$\{ \bfT_i, \bfx_i \}_{i=1}^n$,
\begin{enumerate}[(a)]
\item For each $i \in [n]$, sample non-event times as a time-inhomogeneous Poisson point process with intensity according to~$\pi_i (t)$
\item \label{p2} Estimate mean~$\mu_y (t,s)$ for $0 \leq s \leq \Delta$ at all event times~$t \in \cup_{i=1}^n \bfT_i$ and sampled non-event times~$t \in \cup_{i=1}^n \bfD_i$.
\item \label{p3} Compute marginal covariance across event times,~$\Sigma_1$, and non-event times,~$\Sigma_0$.
\item \label{p4} Compute eigendecomposition~$\{ \hat \psi_k^{(y)}, \hat \lambda_k^{()} \}$ of marginal covariance~$\Sigma_y$
\item Use the eigendecomposition to construct~$W_{i,t}$ for all $i \in [n]$ and $t \in \bfD_i \cup \bfT_i$
\item \label{point:log} Perform logistic regression with binary outcome~$\{ \Y_i (t) \}$  and offset of~$\log \pi_i (t)$.
\end{enumerate}
\end{rmk}

\noindent Before demonstrating the methodology via simulation in Section~\ref{section:simstudy} and a worked example in Section~\ref{section:example}, we provide a theoretical analysis of our current proposal.

\subsection{Theoretical analysis}

Our theoretical analysis requires assumptions regarding the subsampling procedure, the event process, and the functional data. We state these assumptions and then our main theorems. We start by assuming there exists a~$\tau < \infty$ such that all individuals are no longer at risk (i.e.,~$\tau_i < \tau$ for all~$i$). Moreover, define $R_i (t)$ to be the at-risk indicator for participant~$i$, i.e., $R_i (t) = {\bf 1} [t \in (0,\tau_i)]$. Asymptotic theory provided in Lemma~\ref{lemma:simpleasym} will be proven under regularity conditions A-E in~\cite[pp. 420--421]{Andersen1993} along with the following additional assumptions:

\begin{assumption}[Event process assumptions]
\label{assumption:events}\normalfont
We assume the following holds:
\begin{enumerate}[label=(E.\arabic*)]
\item\label{E1} The subsampling rate is both lower and upper bounded for all at-risk times; that is,~$0 < L < \pi_i (t) < U < \infty$ for all~$i=1,2,\ldots$ and~$t \in [0,\tau]$ such that $R_i (t) = 1$
\item\label{E2} There exists a nonnegative definite matrix~$\Xi (\theta)$ such that
  \[
    n^{-1} \Xi_n (\theta) = n^{-1} \sum_{i=1}^n \int_0^\tau w_i (t; \theta) \times \left[ \frac{h_i^{(1)}(t; \theta) h_i^{(1)} (t;\theta)^\top}{h_i (t; \theta)} \right] \times   R_i (t) dt \overset{P}{\to} \Xi (\theta).
  \]
\item\label{E3} There exists~$M$ such that $|W_{i,j,t}| < M$ for all~$(i,j,t)$.
\item\label{E4} For all~$j,k$
\[
n^{-1} \sum_{i=1}^n \int_0^\tau \left | \frac{d^2}{d\theta_j
    d\theta_k} h_i (t;\theta_0) \right|^2 R_i (t) dt \overset{P}{\to}
C < \infty
\]
as $n \to \infty$.
\end{enumerate}
\end{assumption}

We also require several assumptions due to the truncation of the Karhunen-Lo{\`e}ve decomposition that represents~$X(t,s)$.
\begin{assumption}[Functional assumptions~\citep{Park2018}] \normalfont
\label{assumption:truncation}
The following assumptions are standard in prior work on longitudinal functional data analysis~\citep{Park2018, Yao2005, ChenMuller2012}:
\begin{enumerate}[label=(A.\arabic*)]
\item\label{A1} $X = \{ X(t, s) : (t,s) \in \mathcal{T} \times
  \mathcal{S} \}$ is a square integrable element of the $L^2 (
  \mathcal{T} \times \mathcal{S})$
\item\label{A2} The subsampling and conditional intensity rate functions~$f_y(T)$ are continuous and~$\sup |f_y(T)| < \infty$.
\item\label{A3} $\E[X(t,s) X(t,s^\prime) X(t^\prime,s) X(t^\prime, s^\prime) ] < \infty$ for each $s,s^\prime \in [0,\Delta]$ and~$0 < t, t^\prime < \tau$.
\item\label{A4} $\E[\|X(t,\cdot)\|^4] < \infty$ for each~$0< t < \tau$.
\end{enumerate}
\end{assumption}
Finally, for simplicity, we assume that there exists~$b^\star$ such that~$\beta(t) = \mathbold{\phi} (t) b^\star$; that is, the true function~$\beta(t)$ sits in the span of the spline basis expansion. With the necessary assumptions stated, we can now state Lemma~\ref{lemma:simpleasym}, which provides asymptotic theory.

\begin{lemma} \normalfont
\label{lemma:simpleasym}
Under Assumption~\ref{assumption:events}, Assumption~\ref{assumption:truncation}, and~$\Delta$ known, for large~$n$ the estimator~$\hat \theta_n$ is consistent; moreover,
\[
\sqrt{n} (\hat \theta - \theta) \to N(0, \Xi (\theta)^{-1})
\]
where
\[
  \Xi (\theta) = \int_{0}^{\tau} w(s; \theta) \times \left[ \frac{h^{(1)}(s;
      \theta) \times  h^{(1)} (s;\theta)^{\top}}{h(s; \theta)} \right]
  ds.
\]
and~$\tau$ is the random censoring time of the event process.
\end{lemma}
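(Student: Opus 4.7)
The plan is to verify the standard Z-estimator conditions: asymptotic normality of the score at $\theta_0$, uniform convergence of its Jacobian in a neighborhood of $\theta_0$, and then a one-step Taylor expansion of $\hat U_n(\hat\theta)=0$. I first rewrite the approximate score~\eqref{eq:approxscore} at the truth using the Doob--Meyer decomposition:
\[
\hat U_n(\theta_0) = \sum_{i=1}^n \int_0^{\tau_i} \frac{w_i(u;\theta_0)\, h_i^{(1)}(u;\theta_0)}{h_i(u;\theta_0)}\,dM_i^N(u)\;-\;\sum_{i=1}^n \int_0^{\tau_i} \frac{w_i(u;\theta_0)\, h_i^{(1)}(u;\theta_0)}{\pi_i(u)}\,dM_i^D(u),
\]
where $M_i^N$ is the event-process martingale with compensator $\int h_i(\cdot;\theta_0) R_i\,du$ and $M_i^D$ the martingale associated with the independent inhomogeneous Poisson subsampling with compensator $\int \pi_i R_i\,du$. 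Independence of the subsampling from $(N_i,x_i)$ makes the two stochastic integrals orthogonal, so the predictable quadratic variation is the sum of the two. A brief simplification using $1/h_i + 1/\pi_i = (\pi_i+h_i)/(\pi_i h_i)$ and $w_i = \pi_i/(\pi_i+h_i)$ collapses the per-subject contribution to $\int_0^{\tau_i} w_i\,h_i^{(1)} (h_i^{(1)})^\top / h_i\,du$, so the normalized predictable variation is exactly $n^{-1}\Xi_n(\theta_0)$, which converges to $\Xi(\theta_0)$ by~\ref{E2}.

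I would then invoke Rebolledo's martingale central limit theorem (Andersen and Gill, Ch.\ II). The Lindeberg condition is immediate from $L<\pi_i(u)<U$ in~\ref{E1} and $|W_{i,j,t}|<M$ in~\ref{E3}, which together with the log-linear form~\eqref{eq:approx_hazard} keep the jumps of the normalized martingale uniformly $O(n^{-1/2})$. Hence $n^{-1/2}\hat U_n(\theta_0)\Rightarrow N(0,\Xi(\theta_0))$. For the Jacobian, differentiating $\hat U_n$ and again applying Doob--Meyer splits it into a compensator that converges in probability to $-\Xi(\theta)$ uniformly on a shrinking ball around $\theta_0$ (by~\ref{E2} and continuity of $w$, $h^{(1)}$, $h$ in $\theta$) plus a martingale remainder whose variance is controlled by~\ref{E4} and~\ref{E3} and hence is $o_p(n)$. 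Consistency of $\hat\theta_n$ follows from the concavity of the equivalent logistic log-likelihood established in Lemma~\ref{lemma:logistic} together with identifiability from~\ref{E2}; the Taylor expansion $0=\hat U_n(\hat\theta)=\hat U_n(\theta_0)+\partial_\theta\hat U_n(\tilde\theta)(\hat\theta-\theta_0)$ then delivers the stated normal limit with inverse information $\Xi(\theta)^{-1}$.

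The main obstacle is the plug-in nature of $W_{i,t}$: the marginal mean $\hat\mu_y$ and eigenfunctions $\hat\psi_k^{(y)}$ are themselves estimated from the sandwich-smoothed marginal covariance, so one actually works with $\hat U_n(\theta;\hat\mu,\hat\psi)$ rather than with an oracle using population quantities. Under Assumption~\ref{assumption:truncation}, the results of~\citet{Park2018} and~\citet{Xiao2013} yield uniform consistency of $\hat\mu_y$ and $\hat\psi_k^{(y)}$ at rates faster than $n^{-1/4}$. Since the score is a linear-in-$W_{i,t}$ functional and each entry of $W_{i,t}$ is a bounded linear combination of these plug-ins, the induced perturbation of $n^{-1/2}\hat U_n(\theta_0)$ can be bounded by the product of the plug-in rate and a Glivenko--Cantelli-type empirical process over the bounded class implied by~\ref{E1}--\ref{E3}, which is $o_p(1)$. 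The assumption $\beta(s)=\mathbold{\phi}(s) b^\star$ removes any approximation bias in $b^\star$, so once this plug-in step is shown to be asymptotically negligible the remainder of the argument is routine.
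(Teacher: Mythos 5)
Your core normality argument coincides with the paper's own: the paper likewise writes the score at $\theta_0$ as the difference of two orthogonal stochastic integrals against the event and subsampling martingales, performs exactly your collapse $w^2(1/h+1/\pi)=w/h$ so that the summed predictable variation converges to $\Xi(\theta_0)$ under \ref{E2}, and then invokes Rebolledo's martingale CLT. Where you genuinely differ is consistency: the paper verifies the three conditions of Theorem VI.1.1 of \cite{Andersen1993} (vanishing normalized score, convergence of the normalized Jacobian, and a uniform bound on second derivatives obtained via Lenglart's inequality and \ref{E4}), whereas you obtain consistency from concavity of the logistic objective of Lemma~\ref{lemma:logistic} together with identifiability. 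That is a legitimate, arguably cleaner route, though it needs $\Xi(\theta_0)$ to be positive definite — \ref{E2} only grants nonnegative definiteness — a strengthening the paper also implicitly requires for $\Xi(\theta_0)^{-1}$ to exist.

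The genuine gap is in your handling of the feasible score. Your martingale decomposition of $\hat U_n(\theta_0)$ is valid only when the hazard in the weights and integrands is the \emph{true} intensity, i.e.\ the one involving the full integral $\int_{t-\Delta}^t x_i(s)\beta(s)\,ds$. The estimator, however, solves $\tilde U_n(\theta)=0$, built from $\exp(W_{i,t}^\top\theta)$ in which $X(t,\cdot)$ has been truncated to its first $K_x$ Karhunen--Lo\`eve components; the compensator of the event process is \emph{not} $\exp(W_{i,t}^\top\theta_0)R_i(t)\,dt$, so the martingale property your whole argument rests on fails for $\tilde U_n$ unless the truncation error is controlled. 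This is precisely where the paper's proof puts its effort: it decomposes $n^{-1}\tilde U_n(\theta_0)=n^{-1}U_n(\theta_0)+n^{-1}\bigl(\hat U_n(\theta_0)-U_n(\theta_0)\bigr)+n^{-1}\bigl(\tilde U_n(\theta_0)-\hat U_n(\theta_0)\bigr)$, kills the middle term by Lenglart, and bounds the last term using the $1/4$-Lipschitz bound on the expit derivative together with the convergence of the truncated expansion from \cite{Park2018}, the conclusion holding as $K_x\to\infty$. Your remark that $\beta(s)=\mathbold{\phi}(s)b^\star$ ``removes any approximation bias'' disposes only of the spline bias in $\beta$; it does nothing about the KL tail $\sum_{k>K_x}c_{i,k}(t)\psi_k(s)$ of $X$, which is the term at issue. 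Separately, your plug-in argument for $\hat\mu_y,\hat\psi_k^{(y)}$ is incomplete as stated: after $n^{1/2}$ scaling, the drift term $\sqrt{n}\,P\bigl(f_{\hat\eta}-f_{\eta_0}\bigr)$ requires either a Neyman-orthogonality/second-order-bias argument or nuisance rates $o_p(n^{-1/2})$, not merely a rate faster than $n^{-1/4}$ multiplied by a Glivenko--Cantelli bound (in fairness, the paper is silent on this point as well).
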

An estimator for~$\Xi(\theta)$ is
\begin{equation}
\label{eq:fisher}
  \hat \Xi (\theta) = n^{-1} \sum_{i=1}^n \sum_{t \in \bfT_i \cup D_i}
  w_i(t;\theta) (1-w_i(t;\theta)) \left[ \frac{h_i^{(1)}(t;
      \theta)}{h_i (t; \theta)} \right] \times  \left [
    \frac{h_i^{(1)} (t;\theta)}{h_i(t; \theta)} \right]^\top
\end{equation}
For the log-linear intensity model, the sampling-unbiased estimator for~$\hat \Xi(\theta)$ is equivalent to the Fisher information for the previously described logistic regression model.
This implies that subsampling from an inhomogeneous Poisson process, standard logistic regression software can be used to fit the recurrent event model by specifying an offset equal to~$\log \pi_i (t)$. Not only this, Lemma~\ref{prop:optimal} shows weights~\eqref{eq:waage_weights} are optimal within a particular class of weighted estimating equations.

\begin{lemma} \normalfont
\label{prop:optimal}
If the event process is an inhomogeneous Poisson point process with intensity~$h(t; \theta)$ and subsampling occurs via an independent, inhomogeneous Poisson point process with intensity~$\pi (t)$, then~$\hat U_n (\theta)$ are optimal estimating functions (i.e., most efficient) in the class of weighted estimating functions given by~\eqref{eq:approxscore} replacing~\eqref{eq:waage_weights} by any weight function~$w_i (t; \theta)$. This class includes the Horvitz-Thompson estimator under~$w(s; \theta) = 1$.
\end{lemma}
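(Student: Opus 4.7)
\emph{Proof proposal.}
The plan is to apply Godambe information theory, exploiting the fact that under the Poisson hypothesis both the event process $N_i$ and the subsampling process $N_i^D$ are independent inhomogeneous Poisson processes with \emph{deterministic} intensities $h_i(t;\theta_0)$ and $\pi_i(t)$. Consequently every first and second moment of $\hat U_n$ is available in closed form via Campbell's formula, and optimality of $w^\ast$ will reduce to a linear-algebraic identity rather than a functional calculus argument.

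Writing $\hat U_n(\theta;w)=\sum_i\int_0^{\tau_i} w_i(t;\theta)\bigl[h_i^{(1)}(t;\theta)/h_i(t;\theta)\,dN_i(t) - h_i^{(1)}(t;\theta)/\pi_i(t)\,dN_i^D(t)\bigr]$, I first compute the sensitivity $A(w) := \E_{\theta_0}[-\nabla_\theta \hat U_n(\theta_0;w)]$ by differentiating inside the integrals and substituting the compensators $h_i\,dt$ and $\pi_i\,dt$. The two product-rule terms containing $\nabla_\theta w_i$ cancel exactly between the event and subsampling summands (because $(h_i^{(1)}/h_i)\cdot h_i=(h_i^{(1)}/\pi_i)\cdot \pi_i$), and the identity $\nabla(h_i^{(1)}/h_i)\cdot h_i-\nabla h_i^{(1)}=-h_i^{(1)}(h_i^{(1)})^\top/h_i$ collapses what remains to
\[
A(w)=\sum_i\int_0^{\tau_i}w_i(t)\,\frac{h_i^{(1)}(t)h_i^{(1)}(t)^\top}{h_i(t)}\,dt.
\]
Independence of $N_i$ and $N_i^D$ together with the Poisson identity $\text{Var}(\int f\,dN)=\int f^2\lambda\,dt$ gives
\[
B(w):=\text{Var}_{\theta_0}(\hat U_n(\theta_0;w))=\sum_i\int_0^{\tau_i} w_i(t)^2\,h_i^{(1)}(t)h_i^{(1)}(t)^\top\cdot\frac{h_i(t)+\pi_i(t)}{h_i(t)\pi_i(t)}\,dt.
\]

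The crux is the algebraic identity that the Waagepetersen weight $w_i^\ast(t)=\pi_i(t)/(\pi_i(t)+h_i(t))$ of~\eqref{eq:waage_weights} satisfies $w_i^\ast\cdot (h_i+\pi_i)/(h_i\pi_i)=1/h_i$. Plugging this into the two displays above yields $B(w^\ast)=A(w^\ast)$ and, for any competing weight $w$, the cross-covariance identity $\Cov(\hat U_n(w),\hat U_n(w^\ast))=A(w)$. Since $A(w^\ast)$ is symmetric, the Godambe information for $w^\ast$ is $A(w^\ast)^\top B(w^\ast)^{-1}A(w^\ast)=A(w^\ast)$, so optimality reduces to the L\"owner inequality $A(w^\ast)\succeq A(w)^\top B(w)^{-1}A(w)$ for every competitor $w$. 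By the cross-covariance identity this is exactly
\[
\text{Var}(\hat U_n(w^\ast))-\Cov(\hat U_n(w),\hat U_n(w^\ast))^\top\,\text{Var}(\hat U_n(w))^{-1}\,\Cov(\hat U_n(w),\hat U_n(w^\ast))\succeq 0,
\]
which is the well-known positive semidefiniteness of the Schur complement of the joint covariance matrix of the stacked random vector $(\hat U_n(w),\hat U_n(w^\ast))$. Taking $w\equiv 1$ recovers the Horvitz-Thompson estimator as a special (suboptimal) case.

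The main obstacle is bookkeeping rather than depth: I must verify carefully that the $\nabla_\theta w_i$ contributions cancel between the two summands of $A(w)$ (this relies on evaluating at $\theta=\theta_0$ so the two compensators coincide with the integrand denominators), and that the magic identity $w_i^\ast(h_i+\pi_i)/(h_i\pi_i)=1/h_i$ is responsible for all three downstream simplifications: $B(w^\ast)=A(w^\ast)$, $\Cov(\hat U_n(w),\hat U_n(w^\ast))=A(w)$, and the reduction of the L\"owner comparison to a Schur complement. Once these identities are noticed, the remainder is a single line of covariance algebra.
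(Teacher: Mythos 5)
Your proposal is correct, but it cannot be compared line-by-line with the paper's own argument for a simple reason: the paper never proves this lemma. The appendix proves only Lemma~\ref{lemma:logistic}, Lemma~\ref{lemma:simpleasym}, and asymptotic normality; the optimality of the weights~\eqref{eq:waage_weights} is asserted and implicitly inherited by citation from \cite{Waagepetersen2008}. What you have written is essentially the Godambe--Heyde optimal-estimating-function argument underlying that reference, and your computations check out: under the Poisson hypothesis both compensators are deterministic (conditionally on the covariate paths), so Campbell's formula gives $A(w)=\sum_i\int_0^{\tau_i} w_i\, h_i^{(1)}(h_i^{(1)})^\top/h_i\,dt$ --- the $\nabla_\theta w_i$ and $h_i^{(2)}$ contributions do cancel between the event and subsampling sums precisely because, at $\theta_0$, each integrand multiplied by its compensator equals $h_i^{(1)}$ --- and $B(w)=\sum_i\int_0^{\tau_i} w_i^2\,h_i^{(1)}(h_i^{(1)})^\top (h_i+\pi_i)/(h_i\pi_i)\,dt$. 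The single identity $w_i^\ast\,(h_i+\pi_i)/(h_i\pi_i)=1/h_i$ then delivers both $B(w^\ast)=A(w^\ast)$ and $\Cov\{\hat U_n(w),\hat U_n(w^\ast)\}=A(w)$, after which the L\"owner inequality $A(w^\ast)\succeq A(w)^\top B(w)^{-1}A(w)$ is exactly the positive semidefiniteness of the Schur complement of $\mathrm{Var}\{\hat U_n(w)\}$ in the joint covariance of the stacked vector $(\hat U_n(w^\ast),\hat U_n(w))$; taking $w\equiv 1$ handles the Horvitz--Thompson case. Two points should be made explicit in a full write-up. First, all moments are computed conditionally on $\{x_i\}$, since $h_i$ and hence $w_i^\ast$ are covariate-dependent; because your two key identities hold pointwise in the covariate paths and all quantities involved ($A$, $B$, the cross-covariance) are linear in these conditional moments, the identities and the Schur-complement inequality survive averaging, so optimality holds both conditionally and unconditionally. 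Second, the Schur-complement step needs $\mathrm{Var}\{\hat U_n(w)\}$ nonsingular, which requires a nondegeneracy condition in the spirit of \ref{E1}--\ref{E2} or a generalized-inverse formulation. With those caveats noted, your proposal supplies a complete, self-contained proof of a statement the paper leaves to the literature.
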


\noindent Proposition~\ref{prop:optimal} ensures the only loss of statistical efficiency is due to subsampling and not using a suboptimal estimation procedure given subsampling.

\subsection{Computation versus statistical efficiency tradeoff}

Under assumption~\ref{assumption:truncation}, we next consider the statistical efficiency of our proposed estimator when compared to complete-data maximum likelihood estimation. While subsampling introduces additional variation, it may significantly reduce the overall computational burden. It is this trade-off that we next make precise. In particular, we consider the following choice of subsampling rate,~$\pi(t) = c \times h(t; \theta)$ for $c>0$. That is, the subsampling rate is proportional to the intensity function with time-independent constant~$c > 0$. Under this subsampling rate, the weight function~\eqref{eq:waage_weights} is equal to $c/ (c+1)$. Under Lemma~\ref{lemma:simpleasym},
\[
\Xi (\theta) = \frac{c}{c+1} \int_0^\tau \frac{ h^{(1)} (t; \theta)
  h^{(1)} (t; \theta)^\top}{h (t; \theta)} dt = \frac{c}{c+1} \Sigma (\theta)
\]
where~$\Sigma(\theta)$ is the Fisher information of the complete-data maximum likelihood estimator.
Therefore the relative efficiency is~$c/(1+c)$. For an upper bound~$H = \max_{t \in (0,\tau)} h(t;\theta)$, if we set~$\pi (t) = c \times H$, then the relative efficiency can be lower bounded by~$c / (c+1)$.

Sensor measurements occur multiple times per second.  Suppose the intensity rate is bounded above by $1$ and the unit time scale was hours. If we then  subsample the data at a rate of~$10$ times per hour, then we have a lower bound on the efficiency of $0.909$. For a 4Hz sensor, this reduces the number of samples per hour from~$4 \times 60 \times 60 = 14,400$ per hour to on average $10$ per hour. While the computational complexity of logistic regression is linear in the number of samples, we get $1440$ times reduction in the data size at the cost of a $0.909$ statistical efficiency. If we sample~$100$ times per hour, then the efficiency loss is only $0.999$, with a $144$ times reduction in data size. Table~\ref{tab:compvseff} provides additional examples for a 4Hz and 32Hz sensor rate respectively.  The data reduction depends on this rate; however, the lower bound on statistical efficiency does not because the subsampling rate only depends on the upper bound of the intensity function. In particular, if the events are rare then subsampling rate can be greatly reduced with no impact to statistical efficiency.

\begin{table}[!th]
\centering
\begin{tabular}{l l r r r r r | c}
\multirow{2}{2.5cm}{Sensor rate} &
\multirow{2}{2.5cm}{Subsampling constant ($c$)}
  & \multicolumn{5}{c}{Upper bound on intensity rate per
    hour}
  & \multirow{2}{2cm}{Statistical efficiency}\\ \cline{3-7}
& & 0.5 & 1 & 3 & 5 & 10 \\ \hline
\multirow{3}{*}{4Hz (EDA)}
& 5 & 5760 & 2880 & 960 & 576 & 288 & 0.833 \\
& 10 & 2880 & 1440 & 480 & 288 & 144 & 0.909 \\
& 100 & 288 & 144 & 48 & 29 & 14 & 0.990 \\ \hline
\multirow{3}{*}{32Hz (ACC)}
& 5 & 46080 & 23040 & 7680 & 4608 & 2304 & 0.833 \\
& 10 & 23040 & 11520 & 3840 & 2304 & 1152 & 0.909 \\
& 100 & 2304 & 1152 & 384 & 230 & 115 & 0.990 \\ \hline
\end{tabular}
\caption{Data reduction (total \# of measurements divided by expected number of subsampled measurements) given sensor rate, subsampling constant and an upper bound on the intensity rate.}
\label{tab:compvseff}
\end{table}

\subsection{Penalized functional regression models}

Recall theoretical results were proven under the assumption that there exists~$b^\star$ such that~$\beta(t) = \phi(t) {\bf b}^\star$. To make this assumption plausible, we set~$K_b$ large enough (but less than~$K_x$ to ensure identifiability) to ensure the spline basis expansion is sufficiently expressive. However, in practice, such a choice of~$K_b$ may lead to overfitting the data. Following~\cite{Goldsmith2011}, we choose the polynomial spline model and set~$\phi(t) = b_0 + b_1 t + b_2 t^2 + \sum_{j=3}^{K_b} (t-\kappa_j)^3$ where~$\{ \kappa_j \}_{j=3}^{K_b}$ are the chosen knots and assume~$\{ b_j \}_{j=3}^{K_b} \sim N(0, \sigma^2 I)$ to induce smoothness on the spline model.  Combining the penalized spline formulation with Lemma~\ref{lemma:logistic} establishes a connection between our approximate score equations and solving a generalized mixed effects logistic regression with offset.

\subsection{Confidence intervals for~$\beta(t)$}

Due to the connection with generalized mixed effects models, we can leverage existing inferential machinery to obtain variance-covariance estimates of model parameters. That is, if~$\hat \Sigma_{bb}$ is the $K_b \times K_b$ dimensional matrix obtained by plugging in the estimates of variance components into the formula for the variance of~$\hat b$, then the standard error for estimate at time~$t_0$ -- i.e., ~$\hat \beta (t_0) = \phi(t_0) \hat {\bf b}$ -- is given by~$\sqrt{ \phi (t_0 ) \hat \Sigma_{bb} \phi(t_0)^\top}$.  Then the approximate 95\% confidence interval can be constructed as~$\hat \beta (t_0) \pm 1.96 \sqrt{\phi (t_0) \hat \Sigma_{bb} \phi(t_0)^\top}$.

We acknowledge two important limitations of confidence intervals obtained via this approach. First, penalization may lead to confidence intervals that perform poorly in regions where~$\hat \beta(t)$ is oversmoothed. Second, we ignore the variability inherent in the longitudinal functional principal component analysis; that is, our estimates ignore the variability in estimation of eigenfunctions~$\hat \psi$ as well as the coefficients~$\hat c_{i,k}(t)$. Joint modeling could be considered as in~\cite{Crainiceanu2010}, however, this is beyond the scope of this article.

\section{Simulation study} \label{section:simstudy}

We next assess the proposed methodology via a simulation study. Here, we assume each individual is observed over five days where each day is defined on the unit interval~$[0,1]$ with $1000$ equally spaced observation times per day. We define~$X(t)$ at the grid of observations as a mean-zero Gaussian process with covariance
\[
  \Sigma (t, t^\prime) =
  \frac{\sigma^2}{\Gamma (v) 2^{\nu-1}}
  \left( \frac{\sqrt{2 \nu} |t-s|}{\rho_t} \right)^{\nu}
  K_v \left( \frac{\sqrt{2 \nu} |t-s|}{\rho_t} \right)
\]
where~$K_v$ is the modified Bessel function of the second kind.  We set~$\nu = 1/2$, $\sigma^2 = 1$, and~$\rho = 0.3$ as well as set~$K_b = K_x = 35$.  For simplicity, we assume~$\Sigma$ is known in computation of the eigendecompositions. Given~$\{ X(t) \}_{0 \leq t \leq 1}$ for a given user-day, we generate event times according a chosen hazard function~$h (t; \theta)$.  To mimic our real data, we set
\[
h(t; \theta) = \exp \left( \theta_0 + \int_{0}^{\Delta} X(t-s) \beta(s) ds \right).
\]
We set~$\Delta$ to mimic a 30-minute window for a $12$-hour day.  We set~$\theta_0 = \log(5/1000)$ to set a baseline risk of approximately 5 events per day. We consider two choices of~$\beta(s)$: (1) $\beta_0 + \exp(- \beta_1 s) $, and (2) $\beta_1  * \sin \left( 2 \pi \frac{s}{\Delta} - \pi/2 \right)$.

We generate $1000$ datasets, each consisting of $500$ user-days.  For a given simulated user-day, we randomly sample non-event times using a Poisson process with rate of a halfhour.  We use the proposed methodology to construct the estimate $\hat \beta_{i,0.5} (t)$ for the $i$th simulated dataset; we then subsample the sampled non-event times with thinning probability $1/2$, $1/4$ and $1/8$.  This results in randomly sampled non-event times given by a Poisson process with rates of an hour, two hours, and four hours.  We can construct the corresponding estimates: $\hat \beta_{i,1} (t)$, $\hat \beta_{i,2} (t)$, and $\hat \beta_{i,4} (t)$ respectively. Subsampling allows us to compare the variance due to subsampling as compared to the variance due to sampling fewer non-event times.

Since we are primarily interested in accuracy, we report the mean integrated squared error (MISE) defined as~$\frac{1}{1000} \sum_{i=1}^{1000} \int_{0}^{\infty} ( \hat \beta_{i,j} (s) - \beta(s))^2 ds$ for $j=0.5,1,2,4$ where~$\beta(s) \equiv 0$ for $s > \Delta$.  The MISE is defined in this manner to account for settings where~$\Delta$ is unknown.  Next, let $\bar \beta_{j} (t) = \frac{1}{1000} \sum_{i=1}^{1000} \hat \beta_{i,j} (t)$ denote the average estimate for $j=0.5,1,2,4$.  Then the squared bias is given by $\int_{0}^{\infty} ( \bar \beta_{j} (s) - \beta (s))^2 ds$ and the variance is given by $\frac{1}{1000} \sum_{j=1}^{1000} \int_{0}^{\infty} ( \hat \beta_{i,j} (s) - \bar \beta_j(s))^2 ds$.  The subsampling variance is defined as $\frac{1}{1000} \sum_{i=1}^{1000} \int_{0}^{\infty} ( \hat \beta_{i,j} (s) - \hat \beta_{i,0.5} (s))^2 ds$.  Table~\ref{tab:mise} show the MISE decomposed into the variance and squared biased as well as the subsampling variance.  To allow fair comparisons across the two choices of $\beta(s)$, all reported numbers are scaled by the integrated square of the true function $\int_{0}^{\infty} \beta(s)^2 ds$. The average runtime (in seconds) is also reported.

Table~\ref{tab:mise} demonstrates that the variance does increase as the sampling rate decreases.  However, the rate of increase in the MISE is quite low relative.  In the second case study, for example, the MISE increases by 6\% while the run time is 3 times faster.  In the first case study, the MISE remains roughly constant but the run time is 4.7 times faster.  This highlights the efficiency-computation trade-off.  Extrapolating to a complete data analysis, we can see that the run time will continue to increase significantly with minimal improvement in the mean integrated squared error.

\begin{table}[!th]
\begin{tabular}{c | p{2cm} p{2cm} p{2cm} p{2cm}}
& \multicolumn{4}{c}{Sampling rate} \\ \cline{2-5}
 & 0.5 & 1 & 2 & 4 \\ \hline
Subsampling variance & - & $9.0 \times 10^{-7}$ & $2.9 \times 10^{-6}$ & $5.8 \times 10^{-6}$  \\
Variance & $7.9 \times 10^{-3}$ & $9.2 \times 10^{-3}$ & $1.0 \times 10^{-2}$ & $1.30 \times 10^{-2}$ \\
Squared Bias & $9.8 \times 10^{-2}$ & $9.5 \times 10^{-2}$ & $9.3 \times 10^{-2}$ & $9.2 \times 10^{-2}$   \\
MISE & $1.1 \times 10^{-1}$ & $1.0 \times 10^{-1}$ & $1.0 \times 10^{-1}$ & $1.0 \times 10^{-1}$ \\ \hline
Avg. runtime (secs) & 317 & 177 & 104 & 68 \\ \hline
Subsampling variance & - & $2.9 \times 10^{-6}$ & $8.9 \times 10^{-6}$ & $2.0 \times 10^{-5}$  \\
Variance & $1.0 \times 10^{-2}$ & $1.3 \times 10^{-2}$ & $2.0 \times 10^{-2}$ & $3.1 \times 10^{-2}$ \\
Squared Bias & $6.9 \times 10^{-2}$ & $6.5 \times 10^{-2}$ & $6.0 \times 10^{-2}$ & $5.4 \times 10^{-2}$   \\
MISE & $7.9 \times 10^{-2}$ & $7.9 \times 10^{-2}$ & $8.0 \times 10^{-2}$ & $8.4 \times 10^{-2}$ \\ \hline
Avg. runtime (secs) & 36 & 23 & 15 & 12 \\ \hline
\end{tabular}
\caption{Mean-integrated squared error, varaince, squared bias, and subsampling variance for $\beta(s)$ given by (1) and (2) respectively.}
\label{tab:mise}
\end{table}

\subsection{Impact of $\Delta$}

A concern with the proposed approach is the selection of window-length, $\Delta$.  Here we investigate the impact of misspecification of the window length for $\beta(t) = \beta_1 \cdot \sin \left( 2 \pi s / \Delta - \pi /2 \right)$ and the true window length ($\Delta^\star$) is set to 32-minutes.  See Appendix~\ref{app:delta} for a similar discussion for $\beta(t) = \beta_0 \exp \left( - \beta_1 s \right)$.  As in the previous simulation, we generate 1000 datasets per condition each consisting of 500 user-days.  For each simulation, we analyze the data using window lengths $\Delta \in (26,29,32,35,37)$.

When the window length is too large, i.e., $\Delta > \Delta^\star$, then asymptotically the estimation is unbiased as $\beta(t) \equiv 0$ for $t > \Delta$; however, we incur a penalty in finite samples, especially for settings where the function is far from zero near $t = \Delta$.  We find the MISE increases the absolute error $|\Delta - \Delta^\star|$ increases, while the variance decreases as a function of $\Delta$.  While the MISE increased for $\Delta < \Delta^\star$, the pointwise estimation error remains low for $t < \Delta$.  This does not hold for $\Delta > \Delta^\star$, where instead we see parameter attenuation, i.e., a bias towards zero in the estimates at each $0 < t< \Delta$.  To capture this, we define a partial MISE as $\frac{\Delta}{\tilde \Delta} \int_0^{\tilde \Delta} ( \hat \beta_{i,j} (s) - \hat \beta (s) )^2 ds$ where $\tilde \Delta = \min(\Delta, \Delta^\star)$, which is the MISE on the subset $0<t< \min (\Delta, \Delta^\star)$ and scaled for comparative purposes.

\begin{table}[!th]
\begin{tabular}{c | c c c c c}
& & \multicolumn{4}{c}{Sampling rate} \\ \cline{3-6}
$\Delta=$ & & 0.5 & 1 & 2 & 4 \\ \hline
\multirow{3}{*}{26} & MISE & 0.400 & 0.404 & 0.409 & 0.413 \\
 & Variance & $1.1 \times 10^{-2}$ & $1.3 \times 10^{-2}$ & $1.5 \times 10^{-2}$ & $1.9 \times 10^{-2}$ \\
  & P-MISE & 0.225 & 0.229 & 0.233 & 0.239 \\ \hline
\multirow{3}{*}{29} & MISE & 0.240 & 0.243 & 0.244 & 0.251 \\
 & Variance & $9.2 \times 10^{-3}$ & $1.1 \times 10^{-2}$ & $1.2 \times 10^{-2}$ & $1.8 \times 10^{-2}$ \\
  & P-MISE & 0.107 & 0.109 & 0.111 & 0.117 \\ \hline
\multirow{3}{*}{32} & MISE & 0.105 & 0.103 & 0.103 & 0.105 \\
 & Variance & $8.0 \times 10^{-3}$ & $8.8 \times 10^{-3}$ & $1.0 \times 10^{-2}$ & $1.4 \times 10^{-2}$ \\
  & P-MISE & 0.105 & 0.103 & 0.103 & 0.105 \\\hline
\multirow{3}{*}{35} & MISE & 0.384 & 0.383 & 0.384 & 0.386 \\
 & Variance & $7.0 \times 10^{-3}$ & $7.7 \times 10^{-3}$ & $9.3 \times 10^{-3}$ & $1.1 \times 10^{-2}$ \\
  & P-MISE & 0.273 & 0.270 & 0.269, & 0.269 \\\hline
\multirow{3}{*}{37} & MISE & 0.589 & 0.588 & 0.589 & 0.591 \\
 & Variance & $6.0 \times 10^{-3}$ & $6.5 \times 10^{-3}$ & $7.9 \times 10^{-3}$ & $1.0 \times 10^{-2}$ \\
  & P-MISE & 0.473 & 0.470 & 0.469 & 0.470 \\\hline
\end{tabular}
\caption{Mean-integrated squared error (MISE), variance, and partial MISE as a function of $\Delta^\star$ and sampling rate when true $\Delta = 32$ minutes.}
\label{tab:mise}
\end{table}

\section{Extensions}

In this section, we demonstrate the flexibility of the proposed approach by exploring extensions in several important directions to ensure these methods are robust for practical use with high frequency data. This section will continue to leverage the connection to generalized functional linear models provided by Lemma~\ref{lemma:logistic}.

\subsection{Multivariate extensions}
\label{section:multiplesensors}

In this section, we extend our model to the case of multiple functional regressors.  That is, suppose $L$ health processes, i.e., $\bfx_i = \{ \bfx_i (t) = (x_{i,1} (t), \ldots, x_{i,L} (t)) \}_{0 < s < \tau_i}$, for each participant is measured at a dense grid of time points. In the suicidal ideation case study, for example, accelerometer is measured at a rate of 32Hz while electrodermal activity (EDA) is measured at a rate of 4Hz.  A multivariate extension of our model~\eqref{eq:hazard} is given by
\begin{equation}
\label{eq:multihazardlinear}
h_i \left( t \given  H_{i,t}^{NX} ; \theta \right) =
h_0 (t; \gamma) \exp \left( g_t \left( H_{i,t}^{N} \right)^{\prime} \alpha
  + \int_{t-\Delta_1}^t x_{i,1} (s) \beta_1(s) ds + \cdots + \int_{t-\Delta_L}^t x_{i,L} (s) \beta_L(s) ds  \right)
\end{equation}
The approach given in Section~\ref{eq:logistication} extends naturally to the multivariate functional setting. For each functional regressor, we estimate the pooled sample covariance $\Sigma_{y,l}$ for $y \in \{0,1\}$ and $l=1,\ldots,L$ as in Section~\ref{sec:margcov}.  Let $\sum_{k=1}^\infty \hat \lambda^{(y)}_{k,l} \hat \psi^{(y)}_{k,l} (s) \hat \psi_{k,l}(t)$ be the spectral decomposition of $\hat \Sigma_{y,l}$.  Then $x_{i,j}(t)$ is approximated using a truncated Karhunen-Loeve $\int_{t-\Delta_l}^t X_{l} (t,s) \beta_l(t) = \left[M_{l,t} + {\bf c}_l (t)^\top J_{\hat \psi_l^{(y)}, \phi_l} \right] {\bf b}_l$.

\subsection{Missing data}
\label{section:missingdata}

Sensor data can often be missing for intervals of time due to sensor wearing issues. In the suicidal ideation case study, for example, there are 2139 self-identified moments of distress across all 91 participants. Of these, 1289 event times had complete data for the prior thirty minutes, 1984 had fraction of missing data on a fine grid less than 30\%, and 1998 had fraction of missing data on a fine grid less than 10\%.

Missing data is a critical issue because~$c_{i,k} (t)$ cannot be estimated if~$X(s,t)$ is not observed for all~$s \in [0,\Delta]$. Moreover, standard errors should reflect the uncertainty in these coefficients when missing data is prevalent. \cite{Goldsmith2011} suggest using best linear unbiased predictors (BLUP) or posterior modes in the mixed effects model to estimate~$c_{i,k} (t)$; however, this is ineffective when there is substantial variability in these estimates.  To deal with this, \cite{Crainiceanu2010} take a full Bayesian analysis. \cite{Yao2005} introduced PACE as an alternative frequentist method. \cite{Petrovich2018} shows that for sparse, irregular longitudinal, the imputation model should not ignore the outcome variable~$Y_i (t)$.

Here we present an extension of~\cite{Petrovich2018} to our setting by leveraging Lemma~\ref{lemma:logistic} and the marginal covariance estimation procedure to construct a multiple imputation procedure. Let~$\bfx_{i} (t)$ denote incomplete sensor data at time~$t$ (i.e., at times~$\{ s_{i,r} \}_{r=1}^{k_{it}}$ in~$[0,\Delta]$. Then
\begin{align}
\label{eq:impute}
\E [ X_i (s,t) \given Y_i(t) = y, \bfx_i (t) ]
  &= \mu_y (s,t) + {\bf a}_{i,t}^\top (s) {\bf B}_{i,t} (\bfx_i (t) -
    \mathbold{\mu}_i (t) ) \\
\text{Cov} \left( X_i (s,t), X_i (s^\prime, t) \given Y_i (t) = y,
  \bfx_i (t) \right)
  &= \Sigma_{t} (s, s^\prime) -
    {\bf a}_{i,t} (s)^\top {\bf B}_{i,t} {\bf a}_{i,t} (s^\prime)
\end{align}
where we have
\[
{\bf a}_{i,t} (s)^\top = \left( \begin{array}{c} \Sigma_t (s_{i,1}, s) \\ \vdots \\
                       \Sigma_t (s_{i,k_{it}}, s) \end{array} \right); \quad
{\bf B}_{i,t}^{-1} = \left(
  \begin{array}{c c c}
    \Sigma_t(s_{i,1}, s_{i,1}) & \Sigma_t(s_{i,1}, s_{i,2}) & \cdots \\
    \Sigma_t(s_{i,1}, s_{i,2}) & \ddots & \vdots \\
    \vdots & \cdots & \Sigma_t (s_{i,k_{it}}, s_{i,k_{it}})
  \end{array} \right),
\]
$\mathbold{\mu}_i (t) = \E [ \bfx_i (t) \given Y_i (t) = y] = \{
\mu_y (s_{j}, t) \}_{j=1}^r$ and~$\mu_y (s,t)$ is the mean of~$X(t,s)$
from group~$y$, and~$\Sigma_t ( s^\prime, s)$ is the covariance
between~$X(s^\prime,t)$ and $X(s,t)$ for $s^\prime, s \in [0, \Delta]$
and~$t \in \mathbb{R}_+$.  To account for overlap between $\Delta$-windows for $t \in \bfT_i \cup \bfD_i$ - i.e., avoid imputing values at a particular time to different values -- imputation is performed sequentially over this set.

\subsubsection{Multiple imputation under uncongeniality}

Multiple imputation yields valid frequentist inferences when the imputation and analysis procedure are congenial~\citep{Meng1994};  the above procedure is derived for function-on-scalar multiple imputation for binary outcomes, which ignored the joint nature of recurrent event analysis in the presence of high frequency sensor data. The main advantage of the above imputation framework is its simplicity and approximate congeniality when events are rare and the sampling rate is low.  The main disadvantage is that the above framework is uncongenial under many events and/or high sampling rates. Again, congeniality ensures good frequentist coverage properties.  A key question is whether we can use the above imputation methods within a general procedure to handle uncongeniality.

To address this, we use the recommendation from~\cite{Bartlett2020} and consider a method that first bootstraps a sample from the dataset and then apply multiple imputation to each bootstrap sample.  This general approach was originally proposed by~\cite{Shao1994} and~\cite{Little2002}. We suppose $B$ bootstraps and $M$ imputations per bootstrap; let $\hat \theta_{b,m}$ denote the estimator for the $m$th imputation of the $b$th bootstrap.  The point estimator is given by $(B)^{-1} \sum_{b=1}^B \hat \theta_{b}$ where $\hat \theta_b = M^{-1} \sum_{m=1}^M \hat \theta_{b,m}$. To construct the confidence interval, we require mean sum of squares with and between boostraps, i.e., $MSW = \frac{1}{B(M-1)} \sum_{b=1}^B \sum_{m=1}^M (\hat \theta_{b,m} - \hat \theta_b ) (\hat \theta_{b,m} - \hat \theta_b )^\top$ and $MSB = \frac{1}{B-1} \sum_{b=1}^B M \cdot (\hat \theta_b - \hat \theta)(\hat \theta_b - \hat \theta)^\top$ respectively. Then the estimator of the variance-covariance matrix of $\hat \theta$ is given by $\hat \Sigma_{B,M} = \left( \frac{B+1}{BM} \right) MSB - MSW/M$.  We obtain the varaiance for~$\beta(t)$ by $\phi (t) \hat \Sigma_{B,M} \phi(t)^\top$.  We follow~\cite{Bartlett2020} and construct confidence intervals based on Satterthwaite’s degrees of freedom, which here is given by
$$
\hat \nu = \phi (t) \hat \Sigma_{B,M} \phi(t)^\top
$$
The bootstrap followed by multiple imputation procedure has been studied extensively by~\cite{Bartlett2020} and is robust to uncongeniality.  The main disadvantage of this approach is its considerable computational intensity.  Recall likelihood calculations were computationally prohibitive by themselves, so combining with bootstrap and MI would further increase this large-scale computation.  The random subsampling framework thus simplifies handling of missing data via connections to function-on-scalar multiple imputation by~\cite{Petrovich2018} as well as to bootstrap to handle uncongeniality by~\cite{Bartlett2020}. Ignoring the computational time of bootstrap sampling, the computational time for the first choise in the simulation study with $B=200$ bootstraps and $M=2$ imputations per bootstrap leads to $35$ hours for a sampling rate of $0.5$ compared to $7$ hours for a sampling rate of $4$, which highlights the benefits of the proposed framework.

\subsection{Multilevel models}

The approach can be extended to multilevel models with functional regressors, which are critical in mobile health where a high degree of individual variation is often observed.  Let $b_i \sim N(0, \sigma_{b}^2)$, then the multilevel extension of~\eqref{eq:hazard} is
\begin{equation}
\label{eq:multilevel}
h_i \left( t \given  H_{i,t}^{NX} ; \theta, b_i \right) \approx
\exp \left( Z_{t}^\top \gamma + g_t \left( H_{i,t}^{N} \right)^{\prime} \alpha
  + \left[ M_{i,t}^\top + C_{i,t}^\top J_{\hat \psi, \phi}\right] (\beta + b_i) \right)
= \exp \left( W_{i,t}^\top \theta  + Z_{i,t}^\top b_i \right),
\end{equation}
where $Z_{i,t} = M_{i,t} + C_{i,t} J_{\hat \psi, \phi}$. Lemma~\ref{lemma:logistic} implies that the random subsampling framework applied to equation~\eqref{eq:multilevel} leads to a penalized logistic mixed-effects model.  As far as the authors are aware, the combination of mixed-effects and $L_2$-penalization on a subset of parameters has not been addressed in existing packages.  To address this, we derive a penalized adaptive Gaussian quadarture (PADQ) for estimation.  For conciseness, the algorithm is presented in Appendix~\ref{app:penlogit}.

\section{A worked example: Adolescent psychiatric inpatient mHealth study} \label{section:example}

During an eight month period in 2018, 91 psychiatric inpatients admitted for suicidal risk to Franciscan Children's Hospital were enrolled in an observational study.  Study data were graciously provided by Evan Kleiman and his study team ({\tt https://kleimanlab.org}). Each study participant wore an Empatica E4~\citep{empaticae3}, a medical-grade wearable device that offers real-time physiological data.  On each user-day, participants were asked to self-identify moments of suicidal distress.  At these times, the participant was asked to press a button on the Empatica E4 device.  The timestamp of the button press was recorded.  One of the primary study goals was to assess the association between sensor-based physiological measurements and self-identified moments of suicidal distress. In particular, the scientific question is whether there are early indicators of escalating distress by monitoring physiological correlates.

A key concern is whether all moments of suicidal distress are recorded.  To ensure this, clinical staff interviewed participants in the evening who were then asked to review their button press activity.  Any events that were identified as incorrect button press activity were removed.  At the end of the 30-day study period, the average number of button presses per day was $2.42$ with a standard deviation of $2.62$.  Investigation of the button press data shows low button-press counts prior to 7AM and a sharp drop off by 11PM. This demonstrates an additional concern: events can only occur when the individual is at-risk, i.e., (A) the individual is wearing the Empatica E4 and (B) is currently awake. To deal with (A) and (B), here we define each study day to begin at 9AM and end at 8PM.

\vspace{0.25cm}
\begin{figure}[!th]
    \centering
    \includegraphics[width=0.8\textwidth]{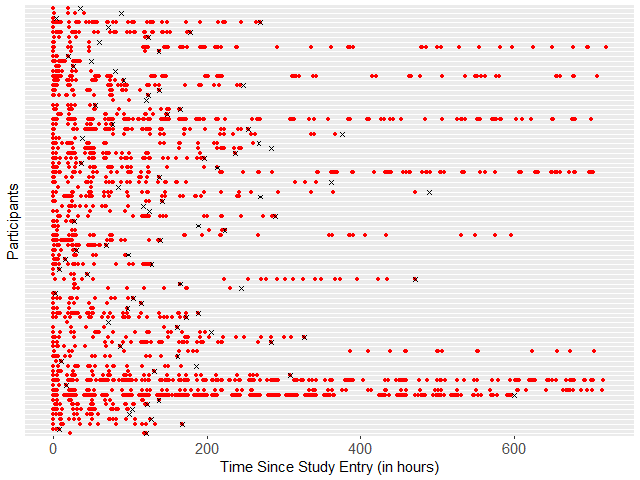}
    \captionof{figure}{User button-presses (red) versus time since study entry (in hours).  The black mark indicates the final sensor measurement time.}
    \label{fig:buttons}
\end{figure}
\vspace{0.25cm}

Figure~\ref{fig:buttons} visualizes button presses versus time since study entry for each user. Day 30 is assumed to censor the observation process. A black mark signals dropout before day 30.   Figure~\ref{fig:buttons} shows the potential heterogeneity in button press rates between users and study days. To assess whether there is between user or between study day variation, Table~\ref{tab:anova} presents a two-way ANOVA decomposition of the button press counts as a function of participant and day in study.  The ANOVA decomposition demonstrates high variation with day in study and across users.

\begin{table}[!th]
\begin{tabular}{ccccccc}\hline
      & DF & Sum Sq & Mean Sq & F value & Pr($>$F) \\ \hline
        Participant & 88 & 2675.4 & 30.4 & 8.9 & $< 2 \times 10^{-16}$ \\
        Day in Study & 29 & 443.3 & 15.3 & 4.5 & $3.9 \times 10^{-13}$ \\
        Residuals & 672 & 2304.2 & 3.4 \\ \hline
      \end{tabular}  \captionof{table}{ANOVA decomposition of daily button press counts}
      \label{tab:anova}
\end{table}

Here, we focus on two physiological processes -- (1) electrodermal activity (EDA), a measure of skin conductance measured at 4Hz, and  (2) \emph{activity index}~\citep{10.1371/journal.pone.0160644}, a coordinate-free feature built from accelerometer data collected at 32Hz that measures physical movement.  Electrodermal activity can be significantly impacted by external factors (e.g., room temperature).  To account for the high between user-day variation, we analyze EDA standardized per study-day and  device. The individual EDA and AI trajectories are highly variable which tends to obscure patterns and trends.  In Figure~\ref{fig:mean_edacc}, the mean trajectories of EDA and AI are plotted in reverse time from button press timestamps, which shows sharp changes in EDA and AI in the 10 minutes prior to button presses.

\begin{figure}[!th]
\centering
\begin{subfigure}{.5\textwidth}
  \centering
  \includegraphics[width=.8\linewidth]{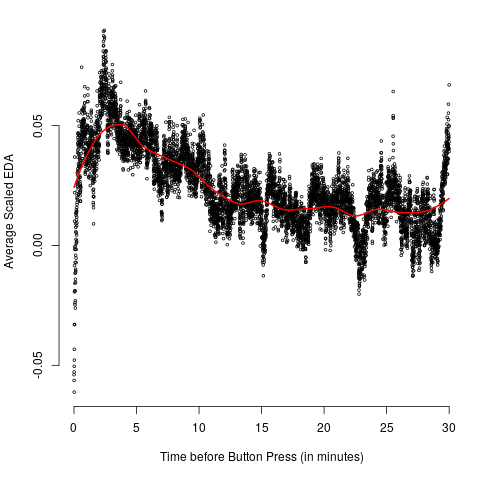}
  \caption{Electrodermal activity (EDA)}
  \label{fig:mean_eda}
\end{subfigure}%
\begin{subfigure}{.5\textwidth}
  \centering
  \includegraphics[width=.8\linewidth]{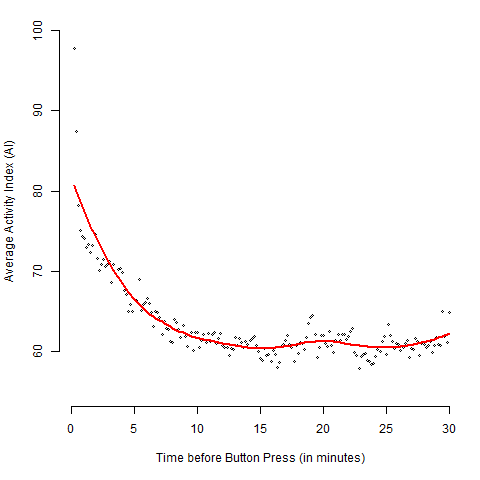}
  \caption{Activity Index (AI)
}  \label{fig:mean_acc}
\end{subfigure}
\caption{Average scaled EDA and AI in the 30 minutes prior to button presses}
\label{fig:mean_edacc}
\end{figure}

\subsection{Complete-case analysis}

Inspection of Figures~\ref{fig:mean_eda} and~\ref{fig:mean_acc} suggest setting $\Delta = 30$ minutes is adequate to capturing the proximal impact of EDA and AI on the risk of a button press.  To ensure minimal loss of efficiency, the subsampling rate was set to once every fifteen minutes.  Given the daily button press rate, this ensures an average of $44$ non-events to $2.5$ events per day.  Based on Table~\ref{tab:compvseff}, this ensures we can achieve a substantial data reduction at a minimal loss of efficiency.  After sampling non-event times, complete-case analyses are performed, i.e., sampled times where sensors included in the model have any level of missing data are ignored.

We analyzed activity index (AI) and electrodermal activity separately as given in equation~~\eqref{eq:hazardlinear} as well as jointly in a multivariate model as in equation~\eqref{eq:multihazardlinear}.  Results did not change substantially.  Figure~\ref{fig:eda_estimate} and~\ref{fig:acc_estimate} presents the estimates from the separate analyses with their associated 95\% confidence intervals.  We highlight in gray the statistically significant regions.  Here, we see that standardized EDA is not associated with increased risk of button press in our population-level model, while activity index sees a positive association in the final few minutes prior to a button press.  In section~\ref{sec:missingdata_heterogeneity}, we investigate whether the population-level results are sensitive to missing data and whether individual-level coefficients,~$\beta(t) + b_i (t)$, effects show stronger dependence on EDA and/or AI for certain individuals.

\begin{figure}[!th]
\centering
\begin{subfigure}{.5\textwidth}
  \centering
  \includegraphics[width=.8\linewidth]{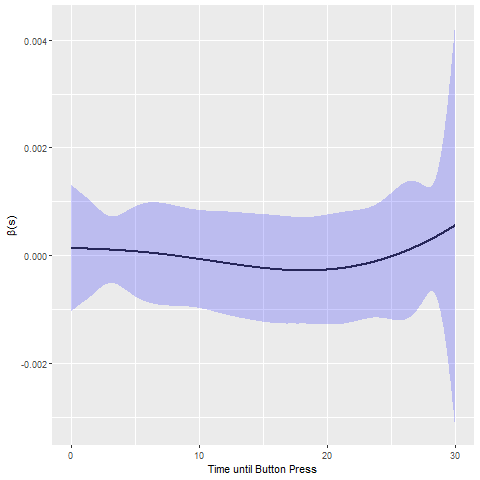}
  \caption{Electrodermal activity (EDA)}
  \label{fig:eda_estimate}
\end{subfigure}%
\begin{subfigure}{.5\textwidth}
  \centering
  \includegraphics[width=.8\linewidth]{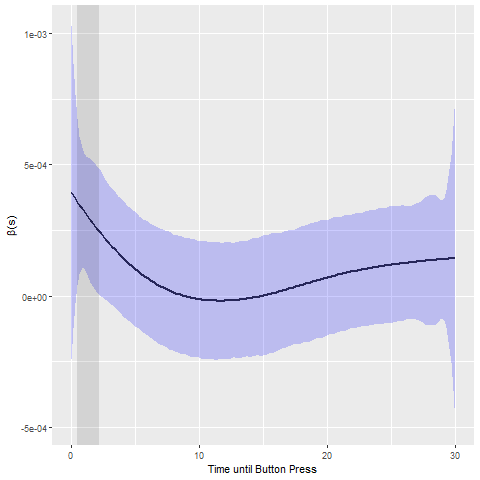}
  \caption{Activity Index (AI)
}  \label{fig:acc_estimate}
\end{subfigure}
\caption{$\beta (t)$ for ACC and EDA with 95\% CI; solid line is complete-case analysis, while dotted line is for }
\label{fig:mean_edacc}
\end{figure}




\section{Discussion}
In this paper, we have presented a methodology for translating a difficult functional analysis with recurrent events problem into a traditional logistic regression.  The translation leveraged subsampling and weighting techniques, specifically the use of weights suggested by~\cite{Waagepetersen2008}, along with flexible functional data analysis methods of~\cite{Goldsmith2011} with marginal covariance methods for longitudinal functional data of~\cite{Park2018}.  The proposed methodology abides by the comment of Robert Gentleman on big data: ``make data as small as possible as fast as possible.''  Subsampling and  weighting converts the problem to well-known territory which allowed us to leverage existing software.  We show limited loss of efficiency when the subsampling is properly tuned to the event rates.  Important extensions to an online sampling algorithm, optimal weighting when the Poisson point process assumption does not hold, and non-linear functional data methods are considered important future work.

\bibliographystyle{plainnat}
\bibliography{si-fda-refs}

\newpage
\appendix

\section{Derivation of the design-unbiased score equations}

\begin{proof}[Proof of Lemma~\ref{lemma:logistic}]
Recall~$\frac{d}{d\theta} \log h_i (t; \theta) = \frac{h^{(1)}_i (t; \theta)}{h_i (t; \theta)}$. Under the log-linear intensity function given by~\eqref{eq:approx_hazard}, $\frac{d}{d\theta} \log h_i (t; \theta) = W_{i,t}$ and
\[
\frac{d}{d \theta} \log \left( \pi_i (t) + h_i (t;\theta) \right) =
\frac{\exp \left( W_{i,t}^\top \theta \right)}{\pi_i (t) + \exp
  \left( W_{i,t}^\top \theta \right)} W_{i,t}
\]
Therefore,
\begin{align*}
\tilde U_n (\theta)
  &= \sum_{i=1}^n \left \{ \sum_{t \in \bfT_i} W_{i,t} - \sum_{t \in \bfT_i \cup \bfD_i} \frac{\exp \left( W_{i,t}^\top \theta \right)}{\pi_i (t) + \exp \left( W_{i,t}^\top \theta \right)}  W_{i,t} \right \} \\
  &= \sum_{i=1}^n \left \{ \sum_{t \in \bfT_i} \frac{\pi_i (t)}{\pi_i (t) + \exp \left( W_{i,t}^\top \theta \right)} W_{i,t} - \sum_{t \in \bfD_i} \frac{\exp \left( W_{i,t}^\top \theta \right)}{\pi_i (t) + \exp \left( W_{i,t}^\top \theta \right)} W_{i,t} \right \} \\
  &= \sum_{i=1}^n \left \{ \sum_{t \in \bfT_i} w_{i} (t; \theta) W_{i,t} -
    \sum_{t \in D_i} (1 - w_i (t; \theta)) W_{i,t} \right \} \\
  &= - \sum_{i=1}^n \sum_{t \in \bfT_i \cup \bfD_i} \left[ {\bf 1} [t \in \bfD_i]  - w_{i} (t; \theta) \right] W_{i,t} \\
  &= \sum_{i=1}^n \sum_{t \in \bfT_i \cup \bfD_i}
    \left[ {\bf 1} [t \in \bfD_i]  - \frac{1}{1 + \exp\left( - (
          \tilde W_{i,t}^\top \theta + \log (\pi_i (t) ) ) \right)}
    \right] \tilde W_{i,t}.
\end{align*}
where~$\tilde W_{i,t} = - W_{i,t}$. This is exactly the score equation
for logistic regression with offset~$\log \pi_i (t)$.
\end{proof}

\begin{proof}[Proof of Lemma~\ref{lemma:simpleasym}]
Define the joint counting process
\[
M_i (t) = N_i (t) - \int_0^t (h_i (s; \theta) + \pi_i (s)) R_i (s) ds.
\]
where~$N_i (t)$ is the counting process with jumps at all~$t \in \bfT_i \cup \bfD_i$. Asymptotic consistency is guaranteed~\cite[Theorem VI.1.1]{Andersen1993} if
\begin{equation}
\label{eq:andersen1}
n^{-1} \tilde U_n (\theta_0 ) \overset{P}{\to} 0,
\end{equation}
and
\begin{equation}
\label{eq:andersen2}
n^{-1} \frac{d}{d \theta^{\top}} \tilde U_n (\theta) \big |_{\theta =
  \theta_0} = \Xi (\theta),
\end{equation}
and
\begin{equation}
\label{eq:andersen3}
\lim_{n \to \infty} P \left( \left | n^{-1} \frac{d^2}{d \theta_j d
      \theta_k} \tilde U_n (\theta) \right | < M \text{ for all }
  j,k \text{ and all } \theta \in \Theta_0 \right) = 1
\end{equation}
To prove~\eqref{eq:andersen1}, we decompose~$\tilde U_n (\theta_0)$ into
three terms
\begin{equation}
\label{eq:decomp}
n^{-1} \tilde U_n (\theta_0) = n^{-1} U_n (\theta_0 )  + n^{-1} \left(
  \hat U_n (\theta_0) - U_n (\theta) \right) + n^{-1} \left(\tilde U_n
  (\theta_0) - \hat U_n (\theta_0) \right)
\end{equation}
where, recall,~$\hat U_n (\theta_0)$ are the logistic score equations with $\int_0^\infty X_i(t,s) \beta(s) ds$ and~$\tilde U_n (\theta_0)$ with the approximation~$C_{i,t}^\top J_{\phi, \psi} {\bf b}$. \cite{Andersen1993} (Theorem VI.1.1) show that~$n^{-1} U_n (\theta_0)
\to 0$. The second term
\begin{align*}
n^{-1} \left( \hat U_n (\theta_0) - U_n (\theta_0) \right) =
  \frac{1}{n} \sum_{i=1}^n \int_0^{\tau} \frac{h^{(1)} (s;
  \theta_0)}{\pi_i (s) + h (s;\theta_0)} d M_i (s).
\end{align*}
where~$h(\cdot; \theta_0)$ is given by~\eqref{eq:hazardlinear}. Lenglart's inequality implies the second term converges in probability to zero under Assumptions~\ref{E1} and~\ref{E2}. The third term satisfies
\begin{align*}
  &n^{-1} \left \| \hat U_n (\theta_0) - \tilde U_n (\theta_0) \right \| \\
  \leq & M n^{-1} \sum_{i=1}^n \sum_{t \in \bfT_i \cup \bfD_i} \bigg \| g
         \left( \tilde W_{i,t}^\top \theta  + \sum_{k=K_x+1}^\infty \hat
         c_{i,k} (t) \int_{t-\Delta}^t \hat \psi_k (s) \phi (s)^\top b
         + \log (\pi_i (t) ) \right) - g \left( \tilde W_{i,t}^\top
         \theta + \log (\pi_i (t)) \right) \bigg \| \\
&\leq M n^{-1} \sum_{i=1}^n \sum_{t \in \bfT_i \cup \bfD_i}
    \sup_{x \in \mathbb{R}} \left \| g^\prime \left( x \right)
  \right\| \times \left \| \sum_{k=K_x+1}^\infty \hat c_{i,k} (t)
  \sum_{l=1}^{K_b} \int_{t-\Delta}^t \hat \psi_k (s) \phi_l (s) b_l
  \right \| \\
&= \frac{M}{4} n^{-1} \sum_{i=1}^n \sum_{t \in \bfT_i \cup \bfD_i}
  \times \sum_{l=1}^{K_b} \left \| \int_{t-\Delta}^t
  \left( X(t,s) - \sum_{k=1}^{K_x} \hat c_{i,k} (t) \psi_k (s) \right)
  \phi_l (s) b_l\right \| \\
&\to \frac{M}{4} \int_0^\tau \sum_{l=1}^{K_b} \left \| \int_{t-\Delta}^t
  \left( X(t,s) - \sum_{k=1}^{K_x} \hat c_{i,k} (t) \psi_k (s) \right)
  \phi_l (s) b_l\right \| (f_0(t) + f_1(t)) dt
\end{align*}
where~$M = \sup \| \tilde W_{i,t} \|$,~$g(x) = 1/(1+\exp(-x))$ is the
expit function.
The second inequality is due to the Taylor remainder theorem. The third is due to~$g^\prime (x) = e^x/(1+e^x)^2 \leq e^0/(1+e^0)^2 = 1/4$, reordering the summation, and rewriting the error term in terms of the difference between~$X(t,s)$ and the approximation with truncation level~$K_x$. Letting~$n \to \infty$ allows us to re-write the outside sums in terms of the integrated difference where the integral is with respect to the joint event and sampling distributions~$(f_0(t) + f_1 (t))$. Finally, under Assumptions~\ref{assumption:truncation}, \cite{Park2018} show
\[
X(t,s) - \sum_{k=1}^{K_x} \hat c_{i,k} (t) \psi_k (s) \overset{P}{\to} 0
\]
as~$K_x \to \infty$, which implies the third term goes to zero as the truncation level goes to infinity. The same argument can be used in conjunction under assumptions~\ref{E1} and~\ref{E4} to prove~\eqref{eq:andersen2}.

To prove~\eqref{eq:andersen3}, we start with equation~\eqref{eq:approxscore}, which we rewrite here for completeness:
\[
\hat{U}_n (\theta) = \sum_{i=1}^n \left[ \sum_{u \in \bfT_i} w_i(u; \theta)
  \frac{h^{(1)} (u; \theta)}{ h ( u; \theta)}  - \sum_{u \in \bfD_i} w_i(u;
  \theta) \frac{h_i^{(1)} (u; \theta)}{ \pi_i (u) } \right].
\]
Let~$N_i^{(e)} (t)$ and $N_i^{(s)} (t)$ be the event and subsampling counting processes with jumps at $t \in \bfT_i$ and $\bfD_i$ respectively.  Then
\begin{equation}
\label{eqapp:upperbound}
\left | n^{-1} \frac{d^2}{d \theta_j d \theta_k} \tilde U_n (\theta)  \right | \leq \frac{1}{n} \sum_{i=1}^n \int_0^\tau w_i (t;\theta) \times H_{in} (t) R_i (t) dN_i^{(e)} (t) +
\frac{1}{n} \sum_{i=1}^n \int_0^\tau \frac{w_i (t;\theta)}{\pi_i (t)} \times G_{in} (t) R_i (t) dN_i^{(e)} (t)
\end{equation}
where $H_{in}(t)$ and $G_{in}(t)$ are from Condition VI.1.1(E) in~\cite[pp. 421]{Andersen1993}. The first term converges in probability to a finite quantity by~\cite{Andersen1993}.  Since~$w_i(t;\theta)/\pi_i (t) = (\pi_i(t)+h_i(t;\theta))^{-1}$, the second term which simplifies to
\[
\frac{1}{n} \sum_{i=1}^n \int_0^\tau \frac{G_{in} (t)}{\pi_i (t)+h_i(t;\theta)} R_i(t) dN_i^{(e)} (t)
\leq
\frac{1}{L \cdot n} \sum_{i=1}^n \int_0^\tau G_{in} (t) R_i (t) dN_i^{(e)} (t)
\]
by Assumption~\ref{E1} (i.e., the subsampling rate is lower bounded by $L$). The right hand side is the optional variation of the local square integrable martingale
$$
\frac{1}{L \cdot n} \sum_{i=1}^n \int_0^t G^{1/2}_{in} (s) dN_i^{(e)} (s)
$$
This martingale has predicable variation process
$$
\frac{1}{L \cdot n} \sum_{i=1}^n \int_0^t G_{in} (s) \pi_i (s) R_i (s) ds
$$
Since both processes have the same limits, conditions VI.1.1(E) in~\cite{Andersen1993} and Assumption~\ref{E1} imply that the second term on the right hand side of~\eqref{eqapp:upperbound} also converges to a finite quantity as $n \to \infty$ for every $K_x$ which completes the proof.

This proves~$\hat \theta_n \overset{P}{\to} \theta_0$ as $n \to \infty$ and $K_x \to \infty$. Note that the argument holds for any choice of weight functions~$w_i (t;\theta)$; implying convergence holds more broadly for all approximate score equations in~\eqref{eq:approxscore}.
\end{proof}

\begin{proof}[Proof of Asymptotic Normality]
Consider a Taylor-expansion of $\tilde U_n (\theta)$ centered at $\theta_0$ when $\hat \theta_n \in \Theta_0$.  We can write as
\begin{align*}
\tilde U_n (\hat \theta_n) &= \tilde U_n (\theta_0) -  (\hat \theta_n - \theta_0) \frac{\partial}{\partial \theta} \tilde U_n (\theta) \mid_{\theta=\theta_0} - (\hat \theta_n - \theta_0) \sum_{l=1}^p (\hat \theta_{ln} - \theta_{l0}) \frac{\partial^2}{\partial \theta \partial \theta_l} \tilde U_n (\theta) \mid_{\theta=\theta^\star} \\
0 &= \tilde U_n (\theta_0) + (\hat \theta_n - \theta_0) \left[ \frac{\partial}{\partial \theta} \tilde U_n (\theta) \mid_{\theta=\theta_0} + \frac{1}{2} \sum_{l=1}^p (\hat \theta_{ln} - \theta_{l0}) \frac{\partial^2}{\partial \theta \partial \theta_l} \tilde U_n (\theta) \mid_{\theta=\theta^\star} \right] \\
n^{1/2} (\hat \theta_n - \theta_0) &= \left[ \frac{1}{n} \frac{\partial}{\partial \theta} \tilde U_n (\theta) \mid_{\theta=\theta_0} + \frac{1}{2n} \sum_{l=1}^p (\hat \theta_{ln} - \theta_{l0}) \frac{\partial^2}{\partial \theta \partial \theta_l} \tilde U_n (\theta) \mid_{\theta=\theta^\star} \right]^{-1} n^{-1/2} \tilde U_n (\theta_0)
\end{align*}
where~$\theta^\star$ is on the line segment between $\hat \theta_n$ and $\theta_0$. First, the term
$$
\frac{1}{n} \frac{\partial}{\partial \theta} \tilde U_n (\theta) \mid_{\theta=\theta_0} + \frac{1}{2n} \sum_{l=1}^p (\hat \theta_{ln} - \theta_{l0}) \frac{\partial^2}{\partial \theta \partial \theta_l} \tilde U_n (\theta) \mid_{\theta=\theta^\star}
\overset{P}{\to} \Xi(\theta_0)
$$
under conditions VI.1.1(A)--(E) in~\cite{Andersen1993} and Assumptions~\ref{E1}--\ref{E4}.  To see this, note the first term converges to $\Xi(\theta_0)$ while the second term is bounded in probability by $p M \| \theta^\star - \theta_0 \|$ for some finite constant~$M$ independent of~$\theta^\star$ where $\| \cdot \|$ is the Euclidean norm.  The second term therefore goes to zero since $\| \theta^\star - \theta_0 \| \to 0$ as $n \to \infty$.

We now consider $n^{-1/2} \tilde U_n (\theta_0 )$. Recall $\tilde U_n (\theta_0)$ can be decomposed into three terms as in~\eqref{eq:decomp}. Here, we assume $K_x \to \infty$ so the final term is negligible.  This leaves two terms
$$
U_n (\theta_0) = \sum_{i=1}^n \int_0^\tau \frac{h^{(1)}(u; \theta)}{h (u;\theta)} dM_i^{(e)} (t)
$$
where~$N_i^{(e)} (t)$ and $N_i^{(s)} (t)$ are the orthogonal event and subsampling counting processes with jumps at times $t \in \bfT_i$ and $t \in \bfD_i$ respectively.
and
\begin{align*}
(\hat U_n (\theta_0) - U_n (\theta_0) )
&= \sum_{i=1}^n \int_0^\tau \frac{h^{(1)}(s; \theta_0)}{\pi_i (s) + h(s; \theta_0)} dM_i (s) \\
&= \sum_{i=1}^n \int_0^\tau \frac{h^{(1)}(s; \theta_0)}{\pi_i (s) + h(s; \theta_0)} dM^{(e)}_i (s) +
\sum_{i=1}^n \int_0^\tau \frac{h^{(1)}(s; \theta_0)}{\pi_i (s) + h(s; \theta_0)} dM_i^{(s)} (s) \\
&= \sum_{i=1}^n \int_0^\tau (1-w(s;\theta_0)) \cdot \frac{h^{(1)}(s; \theta_0)}{h_i (s;\theta_0)} dM_i^{(e)} (s) +
\sum_{i=1}^n \int_0^\tau w(s;\theta_0) \cdot \frac{h^{(1)}(s; \theta_0)}{\pi_i (s)} dM_i^{(s)} (s)
\end{align*}
Taking difference with $U_n(\theta_0)$, the first term becomes
\[
\sum_{i=1}^n \int_0^\tau w(s;\theta_0) \cdot \frac{h^{(1)}(s; \theta_0)}{h_i (s;\theta_0)} dM_i^{(e)} (s)
\]

These are two orthogonal square integrable martingales with quadratic variation given by
\[
\sum_{i=1}^n \int_0^\tau w^2(s;\theta_0) \left[ \frac{h^{(1)}(s; \theta_0)}{h_i (s;\theta_0)} \right] \times \left[ \frac{h^{(1)} (s;\theta_0)}{h_i (s;\theta_0)} \right]^\top h_i (s) ds
\]
and
\[
\sum_{i=1}^n \int_0^\tau w^2(s;\theta_0) \left[ \frac{h^{(1)}(s; \theta_0)}{\pi_i (s)} \right] \times \left[ \frac{h^{(1)} (s;\theta_0)}{\pi_i (s)} \right]^\top \pi_i (s) ds
\]
respectively.  Using the definition $w(s;\theta_0) = \pi_i(s) / (\pi_i(s)+h_i(s;\theta_0))$, we have
\begin{align*}
&\frac{1}{n} \sum_{i=1}^n \int_0^\tau w^2(s;\theta_0) \left[ \frac{h^{(1)}(s; \theta_0)}{h_i (s;\theta_0)} \right] \times \left[ \frac{h^{(1)} (s;\theta_0)}{h_i (s;\theta_0)} \right]^\top h_i (s) ds \\
&+
\frac{1}{n} \sum_{i=1}^n \int_0^\tau w^2(s;\theta_0) \left[ \frac{h^{(1)}(s; \theta_0)}{\pi_i (s)} \right] \times \left[ \frac{h^{(1)} (s;\theta_0)}{\pi_i (s)} \right]^\top \pi_i (s) ds \\
=& \frac{1}{n} \sum_{i=1}^n \int_0^\tau w^2(s;\theta_0) \frac{h^{(1)}(s; \theta_0) [h^{(1)} (s;\theta_0)]^\top}{h_i (s;\theta_0)} \left[ 1 + \frac{h(s;\theta_0)}{\pi_i (s)} \right] ds \\
=& \frac{1}{n} \sum_{i=1}^n \int_0^\tau w^2(s;\theta_0) \frac{h^{(1)}(s; \theta_0) [h^{(1)} (s;\theta_0)]^\top}{h_i (s;\theta_0)} \left[ \frac{\pi_i (s) + h(s;\theta_0)}{\pi_i (s)} \right] ds \\
=& \frac{1}{n} \sum_{i=1}^n \int_0^\tau w(s;\theta_0) \frac{h^{(1)}(s; \theta_0) [h^{(1)} (s;\theta_0)]^\top}{h_i (s;\theta_0)} ds
\to \Xi (\theta_0).
\end{align*}
Returning to the Taylor-expansion, under conditions VI.1.1(A)--(E) in~\cite{Andersen1993} and Assumptions~\ref{E1}--\ref{E4}, we can apply Rebolledo's martingale central limit theorem so that
$$
\sqrt{n} (\hat \theta_n - \theta_0) \overset{D}{\to} N(0, \Xi(\theta_0) ^{-1} \times \Xi(\theta_0) \times \Xi(\theta_0) ^{-1} )
$$
which completes the proof.
\end{proof}

\section{Penalized logistic mixed-effects model}
\label{app:penlogit}

Here we derive a penalized adaptive Gaussian quadrature (PADQ) to fit the   penalized logistic mixed-effects model.  Here, for clarity, we write $Y_{ij} \in \{ 0, 1 \}$ to be the sequence of binary indicators for participant $i = 1,\ldots,n$ and $j$ indexes the event and subsampled times.  We write $W_{ij}$ to denote the time-varying covariate.  Then the score equations for the penalized logistic regression with offset $\log(\pi_{ij})$ is written as
$$
\sum_{i=1}^n \sum_{j=1}^{k_i} \left( y_{ij} - \frac{1}{1+\exp (- W_{ij} \theta - \log (\pi_{ij}) )} \right) W_{ij} - \frac{1}{\sigma^2} \sum_{k=3}^{K_b} \beta_k
$$
Let $B$ be the matrix that extracts the entries in $\theta$ corresponding to $\{ \beta_k\}_{k=3}^{K_b}$ (i.e., $B \theta = ({\bf 0}, \beta_3, \ldots, \beta_{K_b},{\bf 0})$).  Let $\W_i$ be a $k_i \times p$ matrix with the $j$th row equal to $W_{ij}$. Let $\mu(\W_i; \theta, \pi_i)$ be the vector of length $k_i$ with the $j$th entry equal to $\mu(W_{ij}; \theta, \pi_{ij}) = \frac{1}{1+\exp(-W \theta - \log(\pi))}$.  Finally, let $Y_i = (y_{i1},\ldots, y_{i k_i}$ and $E_i (\theta) = Y_i - \mu(\W_i; \theta, \pi_i)$. Then we can succinctly write the above as $\sum_{i=1}^n \W_i^\top E_i (\theta)- \frac{1}{\sigma^2} B \theta$.

We derive the iteratively re-weighted least squares algorithm next. First, take the derivative with respect to $\theta$ to construct the Hessian
$$
- \left[ \sum_{i=1}^n \sum_{j=1}^{k_i} \frac{W_{ij} W_{ij}^\top}{(1+\exp (- W_{ij} \theta - \log (\pi_{ij}) ))^2} \exp(-W_{ij} \theta - \log (\pi_{ij})) + \frac{B}{\sigma^2} \right].
$$
Define the matrix $V_i (\theta)$ be the diagonal matrix with the $(j,j)$th entry equal to $\mu(W_{ij}; \theta, \pi_{ij}) \cdot (1-\mu(W_{ij}; \theta, \pi_{ij})$. Then we can express the Hessian more succinctly as $- \left[ \sum_{i=1}^n \W_i^\top V_i (\theta) \W_i + \frac{B}{\sigma^2} \right]$.  Combining these, starting at an initial estimate, the Newton-Rhapson update given current parameter value $\hat \theta_k$ is given by
$$
\hat \theta_{k+1} = \hat \theta_{k} + \left[ \sum_{i=1}^n \W_i^\top V_i (\hat \theta_k) \W_i + \frac{B}{\sigma^2} \right]^{-1} \left(\sum_{i=1}^n \W_i^\top E_i (\hat \theta_k) - \frac{1}{\sigma^2} B \hat \theta_k \right)
$$

To learn $\sigma^2$, we use the connection to the mixed-effects literature.
Fixing $\theta$, the log-likelihood components related to $\sigma^2$ is given by $-\frac{1}{2\sigma^2} \theta^\top B \theta - \frac{B {\bf 1}}{2} \log (\sigma^2)$.  Differentiating and setting equal to zero yields the estimator
$\hat \sigma^2 = \frac{\theta^\top B \theta}{B {\bf 1}}$.  The complete algorithm is then alternating between Netwon-Rhapson for fixed $\sigma^2$ to estimate $\theta$ and then estimating $\sigma^2$ for fixed $\theta$ using the above formula.

\subsection{Estimation of random-effects via Newton-Rhapson}


In standard logistic mixed-effects model, with $b_i \sim N(0, \Psi)$ and $b_i \in \mathbb{R}^q$, then the likelihood is given by
$$
\begin{aligned}
& \prod_{i=1}^n \int \prod_{j=1}^{k_i} \left( \frac{\exp(W_{ij} \theta + Z_{ij} b_i + \log \pi_{ij})}{1 + \exp(W_{ij} \beta + Z_{ij} b_i + \log \pi_{ij})} \right)^{y_{ij}} \left( \frac{1}{1 + \exp(W_{ij} \beta + Z_{ij} b_i + \log \pi_{ij})} \right)^{1-y_{ij}} \phi (b_i; \sigma_b^2) db_i \\
\propto & \prod_{i=1}^n \int \exp \left[ \sum_{j=1}^{k_i} \left( y_{ij} \left( W_{ij} \theta + Z_{ij} b_i + \log \pi_{ij} \right) - \log \left( 1 + \exp \left( W_{ij} \theta + Z_{ij} b_i + \log \pi_{ij} \right) \right) \right) - b_i^\top \Psi^{-1} b_i /2 \right] db_i \\
\propto& \prod_{i=1}^n \int \exp \left[ g(\theta, b_i, \pi_{ij}, \Psi) \right] db_i,
\end{aligned}
$$
where the second and third line ignore the constant $(2\pi)^{q/2} |\Psi|^{-1/2}$. Differentiating the function $g(\theta, b_i, \pi_{ij}, \Psi)$ with respect to $b_i$, the connection to the prior section can be used.  Re-defining terms appropriately allows us to express the derivatives as
$$
\begin{aligned}
\frac{\partial g(\theta, b_i, \pi_{i}, \Psi)}{\partial b_i} &= Z_i E_i (\theta, b_i) - \Psi^{-1} b_i \\
\frac{\partial g(\theta, b_i, \pi_{i}, \Psi)}{\partial b_i^2} &= - \left[ Z_i V_i (\theta, b_i) Z_i + \Psi^{-1} \right]
\end{aligned}
$$
For fixed $\theta$, the function $g$ is stictly concave function of $b_i$ and therefore we can estimate the unique maximum $\hat b_i$ via Newton-Rhapson method as follows:
$$
\hat b_i^{(k+1)} = \hat b_i^{(k)} + \left[ Z_i V_i (\theta, \hat b_i^{(k)}) Z_i + \Psi^{-1} \right]^{-1}  \left( Z_i E_i (\theta, \hat b_i^{(k)}) - \Psi^{-1} \hat b_i^{(k)} \right)
$$

Fixing $\hat b_i$, we then need to estimate $\Psi$ and $\theta$. Here we employ adaptive gaussian quadrature (cites) as it has been shown that for infrequent events the Laplace approximation tends to underestimate effects (cites).  Let $R_i^\top R_i = Z_i^\top V_i (\theta, b_i) Z_i + \Psi^{-1}$, $z_j$ and $w_j$ for $j=1,\ldots, N_{GQ}$ be the weights for the one-dimensional Gaussian quadrature rule.  Let $\z_\k = (z_{k_1},\ldots, z_{k_q})$, $\tilde b_{i \k} = \hat b_i + R_i^{-1} \z_\k$, and $W_{\k} = \exp ( \| \z_\k \|^2 ) \prod_{l=1}^{q} w_{k_l}$.  Then the AGQ rule is given by
$$
\begin{aligned}
&\int \exp \left[ g(\theta, b_i, \pi_{i}, \Psi) \right] db_i \\
=& \int | R_i |^{-1} \exp \left[ g(\theta, \hat b_i + R_i^{-1} \z, \pi_{i}, \Psi) + \| \z \|^2 / 2 \right] \exp ( - \| \z \|^2 / 2 ) d\z \\
=& (2 \pi)^{q/2} | R_i |^{-1} \sum_{j_1 =1}^{N_{GQ}} \cdots \sum_{j_q =1}^{N_{GQ}} \exp \left[ g(\theta, \tilde b_{i \k}, \pi_{i}, \Psi) \right] W_{\k}
\end{aligned}
$$
The score equations
$$
\sum_{j_1 =1}^{N_{GQ}} \cdots \sum_{j_q =1}^{N_{GQ}}
\left( \sum_{j_1 =1}^{N_{GQ}} \cdots \sum_{j_q =1}^{N_{GQ}} \exp \left[ g(\theta, \tilde b_{i \k}, \pi_{i}, \Psi) \right] W_{\k} \right)^{-1}
\left( \exp \left[ g(\theta, \tilde b_{i \k}, \pi_{i}, \Psi) \right] W_{\k} \right) \frac{\partial}{\partial \theta} g(\theta, \tilde b_{i \k}, \pi_i, \Psi)
$$
Define $\omega(\theta, \tilde b_{i \k}, \pi_i, \Psi)$ to be the weights.  Then we can express score and Hessian as:
$$
\begin{aligned}
&\sum_{\k} \omega(\theta, \tilde b_{i \k}, \pi_i, \Psi) \cdot \frac{\partial}{\partial \theta} g(\theta, \tilde b_{i \k}, \pi_i, \Psi) \\
&\sum_{\k}
\omega(\theta, \tilde b_{i \k}, \pi_i, \Psi) \cdot \frac{\partial^2}{\partial \theta \partial \theta^\top} g(\theta, \tilde b_{i \k}, \pi_i, \Psi) +
\sum_{\k} \omega(\theta, \tilde b_{i \k}, \pi_i, \Psi) \cdot \left [\frac{\partial}{\partial \theta} g(\theta, \tilde b_{i \k}, \pi_i, \Psi) \right] \left [\frac{\partial}{\partial \theta} g(\theta, \tilde b_{i \k}, \pi_i, \Psi) \right]^\top \\
&- \left[ \sum_{\k} \omega(\theta, \tilde b_{i \k}, \pi_i, \Psi) \frac{\partial}{\partial \theta} g(\theta, \tilde b_{i \k}, \pi_i, \Psi) \right] \left[ \sum_{\k} \omega(\theta, \tilde b_{i \k}, \pi_i, \Psi) \frac{\partial}{\partial \theta} g(\theta, \tilde b_{i \k}, \pi_i, \Psi) \right]^\top
\end{aligned}
$$

When $N_{GQ} = 1$, the AGQ approximation is equivalent to a Laplacian approximation.

%
%
%

\section{Additional simulations and case study details}

\subsection{Impact of $\Delta$}
\label{app:delta}

Here we investigate the impact of misspecification of the window length for $\beta(t) = \beta_0 \exp \left( - \beta_1 s \right)$.  To do this, we generate 1000 datasets per condition each consisting of 500 user-days.  For each simulation, we analyze the data using window lengths $\Delta \in (26,29,32,35,37)$.

When the window length is too large, i.e., $\Delta > \Delta^\star$, then asymptotically the estimation is unbiased as $\beta(t) \equiv 0$ for $t > \Delta$; however, we incur a penalty in finite samples, especially for settings where the function is far from zero near $t = \Delta$.  We find the MISE increases and variance slightly decreases as $\Delta$ increases.  While the MISE increased for $\Delta < \Delta^\star$, the pointwise estimation error remains low for $t < \Delta$.  This does not hold for $\Delta > \Delta^\star$, where instead we see parameter attenuation, i.e., a bias towards zero in the estimates at each $0 < t< \Delta$, which is captured by the partial MISE.

\begin{table}[!th]
\begin{tabular}{c | c c c c c}
& & \multicolumn{4}{c}{Sampling rate} \\ \cline{3-6}
$\Delta=$ & & 0.5 & 1 & 2 & 4 \\ \hline
\multirow{3}{*}{26} & MISE & $5.9 \times 10^{-2}$ & $6.1 \times 10^{-2}$ & $6.6 \times 10^{-2}$ & $7.4 \times 10^{-2}$ \\
 & Variance & $1.3 \times 10^{-2}$ & $1.6 \times 10^{-2}$ & $12.4 \times 10^{-2}$ & $3.6 \times 10^{-2}$ \\
  & P-MISE & $7.1 \times 10^{-2}$ & $7.3 \times 10^{-2}$ & $7.8 \times 10^{-2}$ & $8.8 \times 10^{-2}$ \\ \hline
\multirow{3}{*}{29} & MISE & $6.9 \times 10^{-2}$ & $7.0 \times 10^{-2}$ & $7.1 \times 10^{-2}$ & $7.8 \times 10^{-2}$ \\
 & Variance & $1.1 \times 10^{-2}$ & $1.5 \times 10^{-2}$ & $2.0 \times 10^{-2}$ & $3.4 \times 10^{-2}$ \\
  & P-MISE & $7.5 \times 10^{-2}$ & $7.5 \times 10^{-2}$ & $7.7 \times 10^{-2}$ & $8.4 \times 10^{-2}$ \\ \hline
\multirow{3}{*}{32} & MISE & $8.1 \times 10^{-2}$ & $8.0 \times 10^{-2}$ & $8.1 \times 10^{-2}$ & $8.7 \times 10^{-2}$ \\
 & Variance & $1.1 \times 10^{-2}$ & $1.4 \times 10^{-2}$ & $1.9 \times 10^{-2}$ & $3.3 \times 10^{-2}$ \\
  & P-MISE & $8.1 \times 10^{-2}$ & $8.0 \times 10^{-2}$ & $8.1 \times 10^{-2}$ & $8.7 \times 10^{-2}$ \\ \hline
\multirow{3}{*}{35} & MISE & $9.0 \times 10^{-2}$ & $8.8 \times 10^{-2}$ & $8.8 \times 10^{-2}$ & $9.2 \times 10^{-2}$ \\
 & Variance & $1.1 \times 10^{-2}$ & $1.5 \times 10^{-2}$ & $2.2 \times 10^{-2}$ & $3.0 \times 10^{-2}$ \\
  & P-MISE & $8.5 \times 10^{-2}$ & $8.2 \times 10^{-2}$ & $8.1 \times 10^{-2}$ & $8.4 \times 10^{-2}$ \\ \hline
\multirow{3}{*}{37} & MISE & $10.3 \times 10^{-2}$ & $10.1 \times 10^{-2}$ & $9.9 \times 10^{-2}$ & $10.1 \times 10^{-2}$ \\
 & Variance & $1.1 \times 10^{-2}$ & $1.4 \times 10^{-2}$ & $2.0 \times 10^{-2}$ & $3.0 \times 10^{-2}$ \\
  & P-MISE & $9.4 \times 10^{-2}$ & $9.2 \times 10^{-2}$ & $9.0 \times 10^{-2}$ & $9.0 \times 10^{-2}$ \\ \hline
\end{tabular}
\caption{Mean-integrated squared error (MISE), variance, and partial MISE as a function of $\Delta^\star$ and sampling rate when true $\Delta = 32$ minutes.}
\label{tab:appmise}
\end{table}

\end{document}